\apptocmd{\sloppy}{\hbadness 10000\relax}{}{}
\DeclareMathAlphabet{\pazocal}{OMS}{zplm}{m}{n}
\newcommand{\ptrace}{\mathsf{ptrace}}
\newcommand{\trinc}{\mathsf{trinc}}
\newcommand{\btop}{\mathsf{btop}}
\newcommand{\ndtop}{\mathsf{ndtop}}
\newcommand{\qrsim}{\mathsf{qrsim}}
\newcommand{\qsim}{\mathsf{qsim}}
\newcommand{\catC}{\mathcal{C}}
\newcommand{\catE}{\mathcal{E}}
\newcommand{\Set}{\mathbf{Set}}
\newcommand{\Pre}{\mathbf{Pre}}
\newcommand{\Eqv}{\mathbf{Eqv}}
\newcommand{\Top}{\mathbf{Top}}
\newcommand{\PMet}{\mathbf{PMet}}
\newcommand{\QMet}{\mathbf{HMet}}
\newcommand{\EM}{\mathit{EM}}
\newcommand{\Galg}[2]{\text{Alg}_#1(#2)}
\newcommand{\Str}{\textbf{Str}}
\newcommand{\str}{\mathbf{Str}}
\newcommand{\Pow}{\pazocal{P}}
\newcommand{\Prob}{\pazocal{D}}
\newcommand{\depth}{\mathsf{d}}
\newcommand{\Id}{\mathit{Id}}
\newcommand{\by}[1]{(\text{#1})}
\newcommand{\var}{V}
\newcommand{\act}{\mathsf{Act}}
\newcommand{\nat}{\mathbb{N}}
\newcommand{\id}{\mathit{id}}
\newcommand{\ar}{{\text{ar}}}
\newcommand{\sleq}{\sqsubseteq}
\newcommand{\mbar}{\overline{M}_1}
\theoremstyle{definition}
\tikzstyle{shiftarr}=[
\title{Conformance Games for Graded Semantics}
\author{\IEEEauthorblockN{Jonas Forster, Lutz Schr\"oder, Paul
  Wild}
  \IEEEauthorblockA{Friedrich-Alexander-Universität Erlangen-Nürnberg,
  Germany}
}
\begin{document}
\maketitle
\begin{abstract}
Game-theoretic characterizations of process equivalences
traditionally form a central topic in concurrency; for example, most
equivalences on the classical linear-time / branching-time spectrum
come with such characterizations. Recent work on so-called
\emph{graded semantics} has led to a generic behavioural equivalence
game that covers the mentioned games on the linear-time~/
branching-time spectrum and moreover applies in coalgebraic
generality, and thus instantiates also to equivalence games on
systems with non-relational branching type (probabilistic, weighted,
game-based etc.). In the present work, we generalize this approach
to cover other types of process comparison beyond equivalence, such
as behavioural preorders or pseudometrics. At the most general
level, we abstract such notions of \emph{behavioural conformance} in
terms of topological categories, and later specialize to
conformances presented as relational structures to obtain a concrete
syntax. We obtain a sound and complete generic game for behavioural
conformances in this sense. We present a number of instantiations,
obtaining game characterizations of, e.g., trace inclusion,
probabilistic trace distance, bisimulation topologies, and
simulation distances on metric labelled transition systems.
\end{abstract}

\section{Introduction}

\noindent Game-theoretic characterizations of equivalences have a firm place in
the study of concurrent systems. A well-known example is the classical
Spoiler-Duplicator game for
bisimilarity, which is played on pairs of states in labelled
transition systems (LTS), and in which Duplicator has a winning
strategy at a pair of states iff the two states are
bisimilar~\cite{Stirling99}. One benefit of such games is that they
provide witnesses for both equivalence and inequivalence, in the shape
of winning strategies for the respective player; from winning
strategies of Spoiler, one can in fact often extract distinguishing
formulae in suitable characteristic modal logics
(e.g.~\cite{KoenigEA20}). Besides the mentioned branching-time
bisimulation game, one has behavioural equivalence games for most of
the equivalences on the linear-time / branching-time spectrum of
process equivalences on LTS~\cite{Glabbeek01}. Similar spectra of
behavioural equivalences live over other system types beyond
relational transition systems, such as
probabilistic~\cite{JouSmolka90}, weighted, or neighbourhood-based
systems. We generally refer to equivalences on such spectra as the
\emph{semantics} of systems (bisimulation semantics, trace semantics
etc.).  Recently, a general treatment of behavioural equivalence games
has been given that works generically over both the system
type and the system semantics~\cite{DBLP:conf/lics/FordMSB022}. This
is achieved by on the one hand encapsulating the system type as a set
functor following the paradigm of universal coalgebra~\cite{Rutten00}
and on the other hand by abstracting the system semantics in the
framework of \emph{graded
  semantics}~\cite{DBLP:conf/calco/MiliusPS15}, which is based on
mapping the type functor into a graded monad~\cite{Smirnov08}.

Now the behaviour of concurrent systems can be compared in various
ways that go beyond equivalence. For instance, even classically,
(pre-)order-theoretic notions such as simulation or trace inclusion
have played a key role in system verification; and beyond two-valued
comparisons, there has been long-standing interest in behavioural
distances~\cite{GiacaloneEA90}. Following recent
usage~\cite{BeoharEA24}, we refer to such more general ways of
comparing systems as \emph{behavioural conformances}.  In a nutshell,
the contribution of the present work is to provide a generic
game-theoretic characterization of behavioural conformances,
parametrizing over the system type (given as a functor), the system
semantics (given as a graded semantics), and additionally the type of
behavioural conformance. Our technical assumption on behavioural
conformances is that they form a \emph{topological
  category}~\cite{AHS90} (essentially equivalently, a
$\textbf{CLat}_\sqcap$-fibration~\cite{DBLP:conf/lics/KomoridaKHKH19}). We
present two variants of the game, one where a finite number~$n$ is
determined beforehand and the game is then played for exactly~$n$
rounds, and one where the game is played for an infinite number of
rounds (or until a player gets stuck). Under the assumptions of the
framework, we show that the former characterizes the finite-depth
behavioural conformance induced by the given graded semantics, while
the latter characterizes an infinite-depth behavioural conformance
induced from the graded semantics under additional assumptions saying
essentially that no behaviour can be observed without taking at least
one evolution step in the system. Departing from the most general
setup, we subsequently refine the game under the assumption that the
topological category modelling the behavioural conformance is a
category of relational structures~\cite{DBLP:conf/calco/FordMS21},
which holds, for instance, for behavioural preorders and behavioural
(pseudo-)metrics. The coalgebraic codensity games pioneered by
Komorida et al.~\cite{DBLP:conf/lics/KomoridaKHKH19} are similarly
parametrized over the type of behavioural conformances via topological
categories / $\textbf{CLat}_\sqcap$-fibrations. One key difference
with our games is that codensity games so far apply only to the
branching-time case, while we are interested primarily in coarser
behavioural conformances on generalized linear-time / branching-time
spectra. That said, our generic games do apply also in the
branching-time case, and then instantiate to games that are markedly
different from codensity games. Indeed, our games are, roughly
speaking, dual to codensity games in that codensity games work with
modal \emph{observations} on states, while our games concern the way
behaviours of states are \emph{constructed} in an algebraic sense.

We apply this framework to a number case studies, obtaining
game-theoretic characterizations of classical trace inclusion of LTS;
bisimulation topologies; quantitative similarity of metric LTS; and
probabilistic trace semantics.

\paragraph*{Related Work} We have already discussed work on
coalgebraic codensity
games~\cite{DBLP:conf/lics/KomoridaKHKH19}. Similarly, Kupke and
Rot~\cite{KupkeRot21} give a coalgebraic treatment of
\emph{coinductive predicates} in terms of fibrations, generalizing in
particular behavioural distances but focusing on the branching-time
setting. By the fixpoint nature of behavioural equivalence, our
infinite game relates to some degree to general fixpoint
games~\cite{BaldanEA19,BaldanEA20}.\lsnote{@Jonas: Add details} Our
topological version of the powerset functor that appears in the case
study on bisimulation topologies owes ideas to the Vietoris topology
(via the use of hit sets), and more broadly to work on Vietoris
bisimulations~\cite{BezhanishviliEA10}. The treatment of spectra of
behavioural metrics via graded semantics goes back to work on
characteristic quantitative modal
logics~\cite{ForsterEA24,ForsterEA23}. This work joins a strand of
work on the coalgebraic treatment of behavioural distances
(e.g.~\cite{BreugelWorrell05,BaldanEA18,WildSchroder22}) and the
treatment of spectra of behavioural equivalences via graded
semantics~\cite{DBLP:conf/calco/MiliusPS15,DBLP:conf/concur/DorschMS19,DBLP:conf/lics/FordMSB022}. Graded
semantics essentially subsumes earlier coalgebraic treatments of
linear-time equivalences based on Kleisli~\cite{HasuoEA07} and
Eilenberg-Moore categories~\cite{JacobsEA15}, respectively. The
Kleisli and Eilenberg-Moore setups are alternatively subsumed by an
approach based on corecursive algebras~\cite{RotEA21}. Spectra of
behavioural metrics have been studied in a highly general approach
based on Galois connections, which, broadly speaking, subsumes a wide
range of examples but leaves more work to concrete
instances~\cite{BeoharEA23,BeoharEA24}. For the presentation of graded
monads on relational structures, we build on work on presenting
monads~\cite{DBLP:conf/lics/MardarePP16} and graded monads on metric
spaces~\cite{ForsterEA23} and
posets~\cite{DBLP:journals/mscs/AdamekFMS21,DBLP:conf/lics/FordMS21}
as well as generalizations to categories of relational
structures~\cite{DBLP:conf/calco/FordMS21,DBLP:phd/dnb/Ford23}.

\section{Preliminaries}
\noindent We recall background on graded
semantics~\cite{DBLP:conf/calco/MiliusPS15,DBLP:conf/concur/DorschMS19}
and topological categories~\cite{AHS90}.

\subsection{Topological Categories}

\noindent We perform most of our technical development on the level of
topological categories, covering examples like topological spaces,
measurable spaces, pseudometric spaces, preorders and equivalence
relations. The concept of topological category is based on the central
notion of initial lift, formally defined next. Generally, initial
lifts induce a structure on a set from a family of maps into sets
already carrying a corresponding structure. For instance, given a
set~$X$ and a family of maps $f_i\colon X\to Y_i$, $i\in I$, where
each of the sets~$Y_i$ is equipped with a pseudometric~$d_i$ (i.e.\ a
distance function that is like a metric except that distinct elements
may have distance~$0$, cf.\
Example~\ref{expl:topological-cats}.\ref{item:top-pmet}), we construct
the largest pseudometric~$d$ on~$X$ making all~$f_i$ non-expansive as
$d(x,y)=\bigwedge_{i\in I}d_i(f_i(x),f_i(y))$. The existence of~$d$
may be viewed as a property of the forgetful functor mapping a
pseudometric space $(X,d)$ to its underlying set~$X$. The general
setting is as follows.

\begin{defn}\label{def:top}
  Let $U\colon \catE\to \catC$ be a functor between categories
  $\catE$, $\catC$. A \emph{$U$-structured source} is a family
  $\mathcal S=(B\xrightarrow{f_i}UA_i)_{i\in I}$ of morphisms in
  $\catC$. Then, a \emph{$U$-initial lift} of~$\mathcal S$ consists of
  an $\catE$-object $A$ and a family of $\catE$-morphisms
  $(A\xrightarrow{\bar{f}_i}A_i)_{i\in I}$ such that
  $U\bar{f}_i = f_i$ for all $i\in I$, and for any source
  $(C\xrightarrow{g_i}A_i)_{i\in I}$ and map $h\colon UC\to UA$ such
  that $f_i\cdot h=Ug_i$ for all~$i$, there is an $\catE$-morphism
  $\bar h\colon C \to A$ such that $U\bar h=h$ and
  $\bar{f}_i \cdot \bar h = g_i$ for all~$i$. The functor~$U$ is
  \emph{topological}, and the pair $(\catE, U)$ (or, by abuse of
  terminology, just the category~$\catE$) is a \emph{topological
  category} if every $U$-structured source has a unique $U$-initial
  lift.
\end{defn}

\noindent
\begin{rem}
  If $(\catE,U)$ is topological, then~$U$ is faithful~\cite{AHS90}; in
  particular,~$h$ as in Definition~\ref{def:top} is necessarily unique
  (uniqueness of~$h$ is included in the definition of initiality
  in~\cite{AHS90} but not used in the proof of faithfulness).
  Uniqueness of initial lifts is not essential, and is ensured by the
  harmless condition that every identity-carried isomorphism is an
  identity.

  Every topological category is also \emph{co-topological}, i.e.\ has
  final lifts of costructured sinks (the dual notion of initial
  lifts). Topological categories are special kinds of fibrations, and
  indeed coincide (up to size issues not relevant here) with
  $\textbf{CLat}_\sqcap$-fibrations~\cite{DBLP:conf/lics/KomoridaKHKH19}.
  We generally use the terminology of topological categories, but
  borrow notation from the theory of fibrations when convenient.
\end{rem}

\begin{defn}
  Let $(\catE, U\colon \catE \to \catC)$ be a topological
  category. Every $\catC$-object $X$ induces a thin category $\catE_X$
  consisting of all $\catE$-objects $P$ such that $UP = X$ and all
  morphisms $h\colon P\to P'$ such that $Uh = \id_X$.  Every
  $\catC$-morphism $f\colon X \to Y$ then induces a \emph{reindexing functor}
  $f^\bullet\colon \catE_Y\to \catE_X$ that maps objects $P$ to the
  domain of the initial lift of the single-morphism $U$-structured
  source $f\colon X \to UP$.  We refer to the objects of $\catE_X$ as
  \emph{conformances} over~$X$. Since thin categories are the same
  as (possibly large)
  preorders, we employ order-theoretic notation for~$\catE_X$; in
  particular, we write $P\sqsubseteq P'$ if there is an
  $\catE_X$-morphism $P\to P'$, and we write~$\bigsqcap$
  and~$\bigsqcup$ for meets and joins, respectively, in~$\catE_X$
  (which exist because~$\catE$ is topological and co-topological).
\end{defn}

\noindent In particular, we can write the initial lift of a $U$-structured source
$(B\xrightarrow{f_i}UA_i)_{i\in I}$ as $\bigsqcap_{i\in I} f_i^\bullet
A_i$. We overload the reindexing notation to apply to morphisms
in $\catE$ as well, by defining of $f^\bullet := (Uf)^\bullet$ for any
$f$ in $\catE$. We will be working exclusively with categories that are
topological over the category $\Set$ of sets:

\begin{expl}\label{expl:topological-cats}
  \begin{enumerate}[wide]
  \item The category $\Eqv$ of equivalence relations and
    equivalence-preserving maps, that is, maps
    $f\colon (X,\sim_X)\to(Y,\sim_Y)$ such that $f(x) \sim_Y f(y)$
    whenever $x \sim_X y$.
    \item The category $\Pre$ of preorders (reflexive and transitive relations) and monotone maps.
    \item\label{item:top-pmet} The category $\PMet$ of (1-bounded)
      pseudometric spaces and nonexpansive maps, where
      $(X,d_X\colon X\times X\to[0,1])$ is a pseudometric space if
      $d_X(x,x) = 0$, $d_X(x,y)=d_X(y,x)$, and
      $d(x,z) \le d(x,y)+d(y,z)$ for all $x,y,z\in X$, and
      $f\colon(X,d_X)\to(Y,d_Y)$ is nonexpansive if
      $d_Y(f(x),f(y)) \le d_X(x,y)$ for all $x,y$.
    \item The category $\Top$ of topological spaces and continuous
      maps.
  \end{enumerate}
\end{expl}

\noindent The reindexing functor $f^\bullet\colon \catE_Y \to \catE_X$ corresponding to a map
$f\colon X\to Y$ has a left adjoint~$f_\bullet\colon \catE_X \to
\catE_Y$, called the
\emph{pushforward} or \emph{direct image} functor, which may be
computed as a final lift. We give explicit descriptions of
$f^\bullet$ and $f_\bullet$ for the topological categories $\Pre$
and $\PMet$, which we use in our running examples.

For ${\le_Y}\in\Pre_Y$ we have $x_1\, f^\bullet(\le_Y)\, x_2$ iff
$f(x_1)\, \le_Y\, f(x_2)$, which in particular means that
$f^\bullet(\le_Y)$ is the greatest (finest) preorder on $X$ that makes
$f$ monotonic.  The pushforward $f_\bullet({\le_X})$ of
${\le_X}\in\Pre_X$ is the least (coarsest) preorder on~$Y$ that makes
$f$ monotonic, constructed by putting
$f(x_1)\,f_\bullet({\le_X})\,f(x_2)$ for all $x_1,x_2$ such that
$x_1\,\le_X x_2$ and then closing under transitivity.

Similarly, for $d_X\in\PMet_X$ and $d_Y\in\PMet_Y$, the reindexing
$f^\bullet(d_Y)$ is the smallest distance on $X$ that makes $f$
nonexpansive, explicitly,
$f^\bullet(d_Y)(x_1,x_2) = d_Y(f(x_1),f(x_2))$, while the pushforward
$f_\bullet(d_X)$ is the largest distance on $Y$ that makes $f$
nonexpansive. (The roles of small and large are reversed here compared
to the previous example; intuitively speaking this is because in
$\PMet$, having smaller distance values amounts to having more
structure on the space).

\subsection{Universal Coalgebra}

\noindent We use \emph{universal coalgebra}~\cite{Rutten00} to achieve
genericity of our results with respect to the system type.

\begin{defn}
  Let $G\colon \catE \to \catE$ be a functor. A \emph{$G$-coalgebra} $(X,
  \gamma)$ consists of an $\catE$-object $X$ and a morphism $\gamma
  \colon X \to GX$. A \emph{homomorphism} between coalgebras $(X, \gamma)$
  and $(Y, \delta)$ is a $\catE$-morphism $h\colon X \to Y$ such that
  $Gh \cdot \gamma = \delta \cdot h$.
\end{defn}

\noindent In a $G$-coalgebra $(X,\gamma)$, we think of~$X$ as an
object of states, and of~$\gamma$ as a transition map that assigns to
each state a structured collection of successors, with~$G$ determining
the type of structured collections and hence the system type; these
ideas are illustrated concretely in Example~\ref{expl:coalgebras}. As
we are working with topological categories~$\catE$ over~$\Set$, we are
often interested in functors $G\colon\catE\to\catE$ that \emph{lift}
some functor $F\colon\Set\to\Set$, i.e.\ $UG = FU$.

On $\Set$ and similar categories, coalgebras admit a notion of
(branching-time) behavioural equivalence: States $x, y$ in a coalgebra
$(X,\gamma)$ are \emph{behaviourally equivalent} if there is a
homomorphism of coalgebras $h\colon X \to Y$ such that $h(x) = h(y)$
(behavioural equivalence between states in different coalgebras may be
captured via disjoint sums of coalgebras). We adapt this notion to the
setting of topological categories by defining the behavioural
conformance $P_\gamma^\infty$ on a coalgebra $(X,\gamma)$ as the
initial lift \(P_\gamma^\infty := \bigsqcap_h h^\bullet Y\) where
$h\colon X \to Y$ ranges over all coalgebra homomorphisms with domain
$(X, \gamma)$.\jfnote{Compare to definition via coinductive
  predicates. (Ours definition subsumes that one)}

Throughout the development, we use the trace inclusion preorder on
labelled transition systems and the probabilistic trace metric on
labelled Markov chains as running examples, illustrating concepts as
they are introduced.

\begin{expl}
  \label{expl:coalgebras}
  \begin{enumerate}[wide]
  \item \label{item:lts} We denote the finite powerset functor by
    $\Pow_\omega\colon \Set \to \Set$. Finitely branching labelled
    transition systems (LTS) over a set $\act$ of actions  are
    coalgebras for the functor $\Pow_\omega(\act \times {-})$.
    Behavioural equivalence then instantiates to the standard notion
    of bisimilarity.
  \item \label{item:pts} Markov chains with (weighted) edges labelled
    in a set $\act$ of actions are coalgebras for the functor
    $G = \Prob(\act \times {-})$, where the \emph{discrete
      distribution functor} $\Prob\colon \Set \to \Set$ maps a set~$X$
    to the set of finitely supported probability distributions over
    $X$, and acts on maps~$f$ via direct image:
    $\Prob f(\mu)(U) = \mu(f^{-1}[U])$. Behavioural equivalence in
    $G$-coalgebras coincides with probabilistic
    bisimulation~\cite{Klin09}.  The functor $\Prob$ can be lifted to
    $\overline \Prob \colon \PMet \to \PMet$ by equipping the set of
    distributions with the \emph{Wasserstein distance}: Given two
    distributions $\mu,\nu \in \Prob X$, define
    $\mathsf{Cpl}(\mu,\nu)$ to be the set of all \emph{couplings} of
    $\mu$ and $\nu$, that is, probability distributions
    $\rho\in\Prob(X\times X)$ such that $\Prob\pi_1(\rho) = \mu$ and
    $\Prob\pi_2(\rho) = \nu$.  Given a pseudometric $d_X$ on $X$, the
    Wasserstein distance is then given by
    \begin{equation*}
      d_{\overline \Prob(X,d_X)}(\mu,\nu) = \inf \{ \catE_\rho(d_X) \mid \rho\in\mathsf{Cpl}(\mu,\nu) \},
    \end{equation*}
    where
    $\catE_\rho(d_X) = \sum_{x_1,x_2\in X} \rho(x_1,x_2)\cdot
    d_X(x_1,x_2)$ denotes the expected distance of a pair of points
    under $\rho$.  The functor $G$ then lifts to
    $\overline G = \overline \Prob(\act \times {-})$, where~$\act$ is
    equipped with the discrete metric. The behavioural conformance
    $P^\omega_\gamma$ on a $\overline G$-coalgebra then coincides with
    bisimilarity distance~\cite{BreugelWorrell05,ForsterEA23b}.
  \end{enumerate}
\end{expl}

\subsection{Graded Monads and Graded Algebras}

\noindent We employ \emph{graded
  semantics}~\cite{DBLP:conf/concur/DorschMS19,DBLP:conf/calco/MiliusPS15}
to capture conformances that are coarser in nature than behavioural
equivalence, especially those that are \emph{linear-time} in the sense
of the linear-time/branching-time spectrum~\cite{Glabbeek01}.  This
framework is based on the central notion of \emph{graded
  monads}~\cite{Smirnov08}, which algebraically describe the structure
of observable behaviours at each finite depth, with \emph{depth}
understood as look-ahead, measured in terms of numbers of transition
steps. Formal definitions will be recalled presently; the guiding
example is the \emph{graded trace monad} on the category of sets,
which captures the trace semantics of $\act$-labelled transition
systems, modelled as $\Pow(\act\times(-))$-coalgebras. This graded
monad consists, inter alia, of set functors $M_n$ for $n<\omega$,
given by $M_nX=\Pow(\act^n\times X)$, with elements of $M_nX$
representing sets of traces annotated with a poststate. The
multiplication has type
$\Pow(\act^n\times\Pow(\act^k\times X))\to\Pow(\act^{n+k}\times X)$,
and distributes traces over sets in the expected way, thus embodying
the linear-time character of the graded monad. General definitions are
as follows.

\begin{defn}
  A \emph{graded monad} $\mathbb{M}$ on a category\;$\catC$ consists
  of a family of functors $M_n\colon \catC \to \catC$ for $n \in
  \nat$ and natural transformations $\eta\colon \Id \to M_0$ (the
  \emph{unit}) and $\mu^{n,k}\colon M_nM_k \to M_{n+k}$ for all $n,
  k \in \nat$ (the \emph{multiplications}), subject to essentially
  the same laws as ordinary monads up to the insertion of grades
  (specifically, one has unit laws $\mu^{0,n}\cdot \eta
  M_n=\id_{M_n}=\mu^{n,0}\cdot M_n\eta$ and an associative law
  $\mu^{n+k,m} \cdot \mu^{n,k}M_m=\mu^{n,k+m}\cdot M_n\mu^{k,m}$).
\end{defn}
\noindent In particular, $(M_0, \eta, \mu^{00})$ is an ordinary (non-graded)
monad.

\noindent Finally, the concept of monad algebra extends to the graded
setting.

\begin{defn}[Graded Algebra]
    Let $\mathbb{M}$ be a graded monad in $\catC$. A \emph{graded
    $M_n$-algebra} $((A_k)_{k \leq n}, (a^{mk})_{m+k \leq n})$ consists
    of a family of $\catC$-objects $A_i$ and morphisms $a^{mk}\colon
    M_mA_k \to A_{m+k}$ satisfying essentially the same laws as a
    monad algebra, up to insertion of the grades. Specifically, we
    have $a^{0m} \cdot \eta_{A_m} = \id_{A_m}$ for $m \leq n$, and
    whenever $m + r + k \leq n$, then $a^{m+r,k} \cdot
    \mu^{m,r}_{A_k}=a^{m,r+k} \cdot M_ma^{r,k}$. An
    \emph{$M_n$-homomorphism} of $M_n$-algebras $A$ and $B$ is a
    family of maps $(f_k\colon A_k \to B_k)_{k \leq n}$ such that
    whenever $m + k \leq n$, then $f_{m+k} \cdot a^{m,k}=b^{m,k} \cdot
    M_mf_k$. Graded $M_n$-algebras and their homomorphisms form a
    category $\Galg{n}{\mathbb{M}}$.
\end{defn}

\noindent (There is also a notion of graded $M_\omega$-algebras with
carriers $A_n$ for all~$n$~\cite{DBLP:conf/calco/MiliusPS15}, which
has formally analogous properties as the standard category of monad
algebras for a plain monad~\cite{FujiiEA16} but is not needed in the
current development.) It is immediate that
$((M_kX)_{k\leq n}, (\mu^{m,k})_{m+k \leq n})$ is an $M_n$-algebra for
every $\catC$-object~$X$. Again, $M_0$-algebras are just (non-graded)
algebras for the monad $(M_0, \eta, \mu^{00})$, and
$M_0$-homomorphisms are their morphisms.  Recall that together these
form a category $\EM(M_0)$, the \emph{Eilenberg-Moore category} of the
monad $M_0$.

Graded monads on $\Set$ can be presented via a graded variant of
equational theories~\cite{DBLP:conf/calco/MiliusPS15} (we will more
generally use graded relational-algebraic presentations in the present
work); most of the theory then needs to require that the operations
and equations concern only the next transition step, i.e.\ take place
in depth at most~$1$. This is abstractly captured by the following
notion~\cite{DBLP:conf/calco/MiliusPS15}:

\begin{defn}[Depth-1 Graded Monads]
  A graded monad is \emph{depth-$1$} if for all $n\in \mathbb{N}$, the
  morphism $\mu^{1,n}$ is a coequalizer of the following diagram:
\begin{equation}\label{eq:depth}
\begin{tikzcd}
  M_1M_0M_nX \arrow[r,swap, "\mu^{1,0}M_n", shift right] \arrow[r, "M_1\mu^{0,n}", shift left] & M_1M_nX \arrow[r, "\mu^{1,n}"] & M_{1+n}X.
\end{tikzcd}
\end{equation}
\end{defn}
\noindent (Mere commutation of the diagram is an instance of the laws
of graded monads.)
\begin{expl}\label{expl:graded-monads}
  We use the following graded monads on sets, which we will later lift to categories that are topological over $\Set$~(Example~\ref{expl:topological-cats}).
  \begin{enumerate}[wide]
  \item\label{item:gm-powers} Given a functor $F$, we have a depth-1 graded monad
    $M_n = F^n$, where $\eta$ and the $\mu^{nk}$ are identity.  This
    graded monad captures finite-depth behavioural
    equivalence~\cite{DBLP:conf/calco/MiliusPS15}.
  \item\label{item:gm-trace} As indicated above, the assignment
    $M_nX=\Pow_\omega(\act^n\times X)$ (for a set~$\act$ of actions)
    extends to a depth-1 graded monad, with multiplication given by
    distributing actions over powerset in the expected
    manner~\cite{DBLP:conf/calco/MiliusPS15}. In the paradigm of
    graded semantics as recalled next, it captures the trace semantics
    of labelled transition systems
    (Example~\ref{expl:coalgebras}.\ref{item:lts}).
  \item\label{item:gm-ptrace} We analogously have a depth-1 graded
    monad given by $M_nX=\Prob(\act^n\times X)$, where multiplication
    distributes actions over convex
    combinations~\cite{DBLP:conf/calco/MiliusPS15}. We will see that
    this graded monad captures the probabilistic trace semantics of
    labelled Markov chains
    (Example~\ref{expl:coalgebras}.\ref{item:pts}).
  \end{enumerate}
\end{expl}

\subsection{Graded Semantics}

\noindent Generally, a \emph{graded semantics}~\cite{DBLP:conf/calco/MiliusPS15}
of a functor $G\colon\catE \to \catE$ is given by a graded
monad~$\mathbb{M}$ and a natural transformation $\alpha\colon G \to
M_1$. Intuitively, $M_n1$ (where~$1$ is a terminal object of~$\catE$)
is a domain of observable behaviours after~$n$ steps,
with~$\alpha$ determining behaviours after one step. For a
$G$-coalgebra $(X, \gamma)$, we inductively define maps
$\gamma^{(n)}\colon X\to M_n1$ assigning to a state in~$X$ its
behaviour after $n$~steps: 

\begin{alignat*}{2}
    &\gamma^{(0)}&&\colon X \xrightarrow{M_0! \cdot \eta} M_01\\
    &\gamma^{(n+1)}&&\colon X \xrightarrow{\alpha \cdot \gamma} M_1X \xrightarrow{M_1\gamma^{(n)}}
    M_1M_n1 \xrightarrow{\mu^{1n}} M_{n+1}1
\end{alignat*}

\noindent In $\textbf{Set}$, we then obtain a notion of equivalence:
Two states $x_1,x_2 \in X$ in a coalgebra $\gamma\colon X \to GX$ are
behaviourally equivalent under the graded semantics
$(\alpha, \mathbb{M})$ if their behaviours agree at every depth,
i.e. if for all $k \in \nat$ we have
$\gamma^{(k)}(x_1) =\gamma^{(k)}(x_2)$. In many cases, one wants to equip
the space of behaviours with additional structure to cover, for
instance, notions of behavioural preorders or pseudometrics. We
capture such more general \emph{behavioural conformances} using
topological categories.

\begin{defn}
  Let $(\catE, U\colon \catE\to\catC)$ be a topological category. Let
  $\mathbb{M}$ be a graded monad on $\catE$ with functors
  $(M_n)_{n\in \nat}$.  The \emph{finite-depth behavioural
    conformance}~$P^\omega_\gamma$ of a $G$-coalgebra $(X,\gamma)$ with respect to
  a graded semantics $(\alpha, \mathbb{M})$ of~$G\colon \catE\to \catE$ is defined as (the
  object part of) the initial lift of the structured source
  \begin{equation*}
    (UX \xrightarrow{U\gamma^{(n)}} UM_n1)_{n\in \nat},
  \end{equation*}
  or, using reindexing notation, $P^\omega_\gamma = \bigsqcap_{n\in
  \nat} (U\gamma^{(n)})^\bullet M_n 1$.
\end{defn}

\noindent We continue to demonstrate these concepts by instantiating
them to our running examples:

\begin{expl}
  \label{expl:graded-semantics}
  \begin{enumerate}[wide]
  \item\label{item:trinc} Considering finitely branching LTS,
    i.e.~coalgebras for $G = \Pow(\act\times(-))$
    (Example~\ref{expl:coalgebras}.\ref{item:lts}), we characterize trace
    inclusion by lifting the graded monad
    from Example~\ref{expl:graded-monads}.\ref{item:gm-trace} to the
    graded monad $\mathbb{M}_\trinc$ on the topological category
    $\Pre$ of preorders and monotone maps: We put
    $M_n = \overline \Pow_\omega(\act^n \times (-))$, where $\act$ is
    considered a discrete preorder, and
    $\overline \Pow_\omega \colon \Pre \to \Pre$ is the lifting of
    $\Pow_\omega\colon \Set \to \Set$ that equips subsets of a
    preorder $(X,\leq)$ with the one-sided Egli-Milner ordering
      \begin{equation*}
        A \leq B \quad \iff \quad \forall a \in A.\; \exists b \in
        B.\; a \leq b.
      \end{equation*}

      \noindent In particular we have for $s,t\in M_n1$ that $s\leq t \iff s
      \subseteq t$.

      We then have a graded semantics $(\id, \mathbb{M}_\trinc)$ for
      the functor
      $\overline{G} = \overline \Pow_\omega(\act \times {-})$. For
      states $x_1,x_2$ in a $G$-coalgebra $(X,\gamma)$, we have
      $x_1\leq x_2$ in the behavioural conformance~$P_\gamma^\omega$
      iff for all $n\in \mathbb{N}$, the set of length-$n$ traces
      of~$x_1$ is a subset of all length-$n$ traces of~$x_2$,
      i.e. $\gamma^{(n)}(x_1) \leq \gamma^{(n)}(x_2)$.  This setup
      requires a $\overline{G}$-coalgebra; we convert an LTS, i.e.\ a
      $G$-coalgebra, into a $\overline{G}$-coalgebra by equipping its
      state set with the discrete (pre)order (which in particular
      ensures monotonicity of the transition map). In the sequel, we
      will generally convert set coalgebras into coalgebras in
      topological categories in this manner without further mention.
    \item \label{item:qtrace} For labelled Markov chains, i.e.\
      coalgebras for $\overline G = \overline \Prob(\act \times {-})$
      (Example~\ref{expl:coalgebras}.\ref{item:pts}), we similarly
      obtain a graded monad $\mathbb{M}_\ptrace$ on $\PMet$
      characterizing probabilistic trace distance by lifting the
      corresponding graded monad on set ($\Prob(\act^n\times(-))$,
      Example~\ref{expl:graded-monads}.\ref{item:gm-ptrace}) to
      $M_n = \overline \Prob(\act^{n} \times {-})$ where, again,
      $\act$ is equipped with the discrete
      pseudometric. %
      We then have a graded semantics
      $(\id, \mathbb{M}_\ptrace)$ for~$\overline G$. Given
      a state $x$ in a $\overline G$-coalgebra $\gamma$, the
      distribution $\gamma^{(n)}(x) \in \Prob(\act^{n})$ gives the probabilities of observing specific traces after $n$
      steps. To compare the observable behaviour of two states
      $x_1, x_2 \in X$, we can then measure how similar two
      distributions $\gamma^{(n)}(x_1), \gamma^{(n)}(x_2)$ are, by computing
      their total variation distance. The graded
      conformance $P^\omega_\gamma$ takes the supremum of these
      distances over all~$n$.
  \end{enumerate}
\end{expl}

\subsection{Pre-Determinization}\label{sec:pre-det}

\noindent From a depth-1 graded monad~$\mathbb{M}$ on a
category~$\catC$, we can canonically construct a functor~$\mbar$ on
the Eilenberg-Moore category of the
monad~$M_0$~\cite{DBLP:conf/lics/FordMSB022}. To begin, the definition
of graded algebras for~$\mathbb{M}$ implies that for $i\in \{0,1\}$,
we have functors
$({-})_i\colon \Galg{1}{\mathbb{M}} \to \Galg{0}{\mathbb{M}}$ taking
the $M_1$-algebra $(A_0, A_1, a^{0,0}, a^{0,1}, a^{1,0})$ to the
$M_0$-algebra $(A_i, a^{0,i})$. Now it can be shown that under
suitable conditions on~$\catC$, in particular for $\catC=\Set$, every
$M_0$-algebra~$A$ freely generates an $M_1$-algebra whose $0$-part
is~$A$. We obtain a functor
$E\colon \Galg{0}{\mathbb{M}}\to \Galg{1}{\mathbb{M}}$ that takes an
$M_0$-algebra~$A$ to the free $M_1$-algebra over~$A$. We then define
the functor
$\mbar \colon \Galg{0}{\mathbb{M}} \to \Galg{0}{\mathbb{M}}$ as
$\mbar := (E{-})_1$. The construction of~$EA$ involves a coequalizer
that in case of $M_0$-algebras of the form $M_nX$ instantiates
to~\eqref{eq:depth}, so that we have
$\mbar(M_nX, \mu^{0,n}) = (M_{n+1}X, \mu^{0,n+1})$.

Now let $L \dashv R$ be the free-forgetful adjunction of the category
$\Galg{0}{\mathbb{M}} \cong \EM(M_0)$.
Given a $G$-coalgebra $\gamma$, and a graded
semantics $(\alpha, \mathbb{M})$, we have
$X \xrightarrow{\alpha \cdot \gamma} M_1X = R\mbar LX$, which by
adjoint transposition yields an $\mbar$-coalgebra
$\gamma^\#\colon LX \to \mbar LX$ called the
\emph{pre-determinization} of $\gamma$.

\begin{expl}\label{expl:pre-det}
  In the case of traces in LTS
  (Example~\ref{expl:graded-monads}.\ref{item:gm-trace}), $M_0$ is simply the finite
  powerset functor, while $M_1 = \Pow_\omega(\act\times(-))$ is isomorphic to
  $(\Pow_\omega{-)}^\act$ when $\act$ is finite.
  The underlying map $R\gamma^\#\colon M_0X\to M_1X$ of the pre-determinization
  is therefore exactly equal to the classical \emph{powerset construction} which
  turns an LTS $X\to (\Pow_\omega X)^\act$ into a \emph{deterministic LTS}
  $\Pow_\omega X \to (\Pow_\omega X)^\act$.
\end{expl}

\noindent
Under the additional condition
that $M_01 \cong 1$, the coalgebra $\gamma^\#$ is an actual determinization in the
sense that on set-based coalgebras, (finite-depth) behavioural
equivalence of states $\eta(x_1)$ and $\eta(x_2)$ in $\gamma^\#$ coincides
with graded equivalence of $x_1$ and $x_2$ in $\gamma$. We take this as
motivation for the following definition:

\begin{defn}
Let $G$ be an endofunctor on a topological category. The
\emph{infinite-depth graded behavioural conformance} $P_\alpha^\infty$
on the $G$-coalgebra $(X, \gamma)$ under $(\alpha, \mathbb{M})$ is
defined as $P_\alpha^\infty = \eta^\bullet P^\infty_{\gamma^\#}$.
\end{defn}
\noindent That is, we take the behavioural conformance
$P^\infty_{\gamma^\#}$ on the pre-determinization $(LX,\gamma^\#)$ and
then reindex along the unit $\eta\colon X\to M_0X=RLX$.

\section{Topological Refinement}

\noindent We next establish some technical notions for algebras on topological
categories that will be relevant in the proofs of our main theorems.
For the remainder of this section, let $\catE$ be a topological
category over $\Set$ with forgetful functor $U\colon \catE \to \Set$,
and let $\overline T \colon \catE \to \catE$ be a lifting of a monad
$T \colon \Set \to \Set$.

\begin{defn}
A $\overline T$-algebra $p \colon \overline TP \to P$ is a
\emph{refinement} of a $\overline T$-algebra $(A, a)$ if
$P\in \catE_{UA}$ with $A \sleq P$ and the unique morphism
$\iota_{A, P}\colon A\to P$ with $U\iota_{A, P} = \id_{UA}$ is a
homomorphism.
\end{defn}
  \begin{lemrep}
    Let $a\colon \overline TA \to A$ be a $\overline T$-algebra, and let $P \in \catE$.
    Then there is at most one algebra structure $p \colon \overline TP \to P$
    such that $p$ is a refinement of $(A,a)$.
  \end{lemrep}
  \begin{proof}
    First note that $\overline T$ preserves epimorphisms, since $U$
    preserves and reflects, and any monad on $\Set$ preserves
    epimorphisms (e.g. \cite[Proposition 21.13]{DBLP:books/daglib/0023249}).
    Since $U\iota_{A,P} =
    \id_{UA}$ is an epimorphism, so is $\iota_{A,P}$, and by
    preservation through $\overline T$ also $\overline T \iota_{A,P}$.
    Let $p\colon \overline T P\to \overline P$
    and $p' \colon \overline TP \to P$ be refinements of $(A,a)$ with carrier $P$.
    Then since $\iota_{A,P}$ is a homomorphism, $p \cdot  \overline T\iota_{A, P}
    = \iota_{A,P} \cdot a = p' \cdot \overline T\iota_{A, P}$. Since
    $\overline T\iota_{A,P}$ is an epimorphism we have $p = p'$.
  \end{proof}

  \noindent We thus refer to elements of $\catE_{UA}$ as
  \emph{refinements} of $(A,a)$ if they carry such an algebra, and
  leave the algebra structure implicit.
  \noindent The refinement of $a$ \emph{generated} by
  $Q\in \catE_{UA}$ is the least refinement $C$ of $a$ such that
  $Q \sleq C$.

  \begin{lemrep}
    The refinement of an
    algebra $a\colon \overline T A\to A$ generated by $Q \in \catE_{UA}$ exists
    uniquely.
  \end{lemrep}
  \begin{proof}
    \label{lem:generated-cong}
    Let $S$ be the set of all refinement $P$ such that $Q \sleq P$,
    and let $C = \bigsqcap S$. Then $C$ is the limit of the diagram of
    all $\iota_{P, P'}$, for $P\sleq P'$ with $P,P' \in S$. Let $p$
    denote the algebra structure on the refinement $P$. The
    morphisms $p \cdot \overline T \iota_{ C, P}\colon \overline T C \to P$ form a cone of this
    diagram, which thus factors through~$C$ via a unique morphism
    $c \colon \overline T C \to C$. It is easy to see that $(C, c)$ is a
    refinement of $(A,a)$, and by construction $C$ is below any other
    refinement $P$ with $Q \sleq P$.
  \end{proof}

  \noindent We denote the refinement generated by a conformance $Q$ by
  $C(Q)$ (when the algebra is clear from context).

In the proof of our main results, we additionally need the following
claim, stating that reindexing under a homomorphism of algebras yields
a refinement.

  \begin{lemrep}
    \label{lem:hom-reindxing}
    Let $(A, a)$ and $(B,b)$ be $\overline T$-algebras, and let
    $h\colon A \to B$ be a homomorphism. Then $h^\bullet B$ is a
    refinement of $(A,a)$.
  \end{lemrep}
  \begin{proof}
    Consider the following diagrams:

    \begin{equation*}
    \begin{tikzcd}
      U\overline T(h^\bullet B) = U\overline TA  \arrow[r, "U\overline
      Th"] \arrow[d, "Ua"] & U\overline TB \arrow[d, "Ub"] \\
      Uh^\bullet B = UA \arrow[r, "Uh"] & UB
    \end{tikzcd}
    \qquad
    \begin{tikzcd}
      \overline T(h^\bullet B)  \arrow[r, "\overline Th"] \arrow[d,
      dashed, "a'"] & \overline TB \arrow[d, "b"] \\
      (h^\bullet B) \arrow[r, "h"] & B
    \end{tikzcd}
  \end{equation*}

  The left hand diagram commutes by hypothesis. Since $h$ is initial
  in the right hand diagram, we obtain the dashed morphism in the
  right hand diagram by initiality. It is obvious that $\iota$ is a
  homomorphism of the $\overline T$-algebras $(A, a)$ and $(h^\bullet B, a')$.
  \end{proof}

  \noindent In order for~$\mbar$ as in \autoref{sec:pre-det} to exist,
  we need the category $\EM(M_0)$ to have coequalizers, which we
  establish under the mild condition that~$M_0$ lifts some set
  functor. To give a description of the colimits, we use the
  following lemma:

  \begin{lemrep}
    The category $\EM(\overline T)$ is topological over $\EM(T)$.
  \end{lemrep}
  \begin{proof}
    Let $V\colon \EM(\overline T) \to \EM(T)$ denote the functor that
    forgets the conformance on an algebra, and let $W \colon \EM(T)
    \to \Set$ be the functor that forgets the algebra structure. Let
    $S$ be a structured source of algebra homomorphisms $(f_i\colon
    (A, a) \to V(B_i, b_i))_{i\in I}$ in $\EM(T)$. Let $\overline A$
    denote the object in $\catE$ that carries the initial lift of the
    structured source $(f_i \colon A \to UB_i)_{i\in I}$.

    We first show that $\overline A$ carries the algebra
    structure $a$, i.e. that $a \colon \overline T \overline A \to
    \overline A$ is a morphism in $\catE$. Since $b_i \cdot \overline
    T f_i$ are all morphisms in $\catE$, and $f_i \cdot a = Tf_i \cdot
    b_i$ in $\Set$, this follows by initiality of $\overline A$.

    It now remains to show that
    $(\overline A, a)$ is an initial lift of the source $S$: Let $(g_i
    \colon (C, c) \to (B_i, b_i)$ be a source of homomorphisms in
    $\EM(\overline T))_{i\in I}$, and let $h \colon V(C, c) \to (A, a)$ be a
    homomorphism of algebras such that $f_i \cdot h = g_i$. By
    initiality of $\overline A$ in $\catE$, we have that $h \colon
    \overline C \to \overline A$ is a morphism in $\catE$.
  \end{proof}

  \noindent The proof shows that initial lifts in $\EM(\overline T)$
  are constructed like in~$\catE$. We then have that colimits in
  $\EM(\overline T)$ can be constructed by taking the colimit in
  $\EM(T)$ and equipping it with the conformance of the final lift.

  \begin{correp}
    The category $\EM(\overline T)$ is cocomplete.
  \end{correp}
  \begin{proof}
    This follows from the facts that $\EM(T)$ is cocomplete
    \cite[20.34]{AHS90}, and that
    categories that are topological over a cocomplete category are
    cocomplete \cite[Theorem 21.16]{AHS90}.
  \end{proof}

\section{Graded Conformance Games}

\noindent We proceed to present the main object of interest,
\emph{graded conformance games}, which characterize a given graded
behavioural conformance.  For the remainder of the technical
development, fix a coalgebra $\gamma \colon X \to GX$ for a functor
$G\colon \catE \to \catE$ on a topological category $\catE$ over
$\Set$, equipped with a depth-1 graded semantics
$(\alpha, \mathbb{M})$ such that~$M_0$ is a lifting of some set
functor.

Roughly speaking, the game is played on the fibre of~$\catE$ over the
carrier of the pre-determinization of $(X,\gamma)$, i.e.\ on the
lattice $\catE_{UM_0X}$. We need the following lattice-theoretic
notion:

\begin{defn}
  A subset $B \subseteq L$ of a complete (join semi-) lattice~$L$ is
  \emph{generating} if for every $P \in L$, there is $A \subseteq B$
  such that $P = \bigvee A$.
\end{defn}

\noindent Now fix a generating set $B$ for the complete lattice
$\catE_{UM_0X}$, whose elements we shall call \emph{basic
  conformances}.

The game is played by two players, Spoiler and Duplicator, where
Duplicator aims to establish that a given basic conformance $P \in B$ is
under the graded behavioural conformance.  To this end, Duplicator
plays a set $Z\subseteq B$ of basic conformances on $UM_0X$, with
$Q:= \bigsqcup Z$, such that the inequality
$P \sleq (\mbar \iota \cdot \gamma^\#)^\bullet \mbar C(Q)$ holds,
where $\iota = \iota_{M_0X,Q}$ (we
say that $Z$ is \emph{admissible} in position $P$ if this condition is
satisfied). In their turn, Spoiler then chooses a conformance
$P' \in Z$ as the next position. If a player is unable to make a move,
they lose. After $n$ rounds, Duplicator wins if for the current position
$P$ it holds that $P \sleq (M_0!_X)^\bullet M_01$. We refer to this
last check as \emph{calling the bluff}.

\begin{table}[h!]
  \centering
  \setlength{\tabcolsep}{3pt}
  \renewcommand{\arraystretch}{1.3}
  \begin{tabular}{l | l | l}
    \thead{Player} & \thead{Position} & \thead{Move}\\
    \hline
    Duplicator & $P \in B$ & $Z \subseteq B$ s.t. $Z$ admissible\\
    Spoiler & $Z \subseteq B$ & $P \in Z$\\
  \end{tabular}
  \vspace{2mm}
  \caption{The graded conformance game}
\end{table}

\begin{rem}
  We understand the basic conformances as representing a complete set
  of properties~$P$ that the behavioural conformance~$P^\omega_\gamma$
  may or may not satisfy, in the sense that
  $P\sqsubseteq P^\omega_\gamma$. For instance, when conformances are
  equivalence relations, then we may take equivalence relations
  identifying precisely two elements as the basic conformances, and
  given such a basic conformance~$P$ identifying precisely~$x$
  and~$y$, the behavioural conformance is above~$P$ iff it
  identifies~$x$ and~$y$ (among others). In this view, a Duplicator move
  from~$P\in B$ to~$Z\subseteq B$ amounts to Duplicator claiming that
  the behavioural conformance satisfies all properties in~$Z$, and
  this move is admissible if the properties in~$Z$ imply~$P$ when one
  looks ahead precisely one transition step. Spoiler is then allowed
  to challenge any of the properties in~$Z$.  Komorida et.
  al~\cite{DBLP:conf/lics/KomoridaKHKH19} distinguish trimmed and
  untrimmed versions of their dual codensity games, where in the
  untrimmed variant, Duplicator may play any conformance on the state
  space, while the trimmed variant restricts them to playing only
  conformances from the generating set.
  One might indeed imagine an untrimmed variation of our game
  as well, with Duplicator first playing any conformance $P$ (not
  necessarily from the generating set) that satisfies the admissibility
  condition and Spoiler then playing any conformance $P' \sleq P$
  below it.  This is equivalent to setting
  $B = \catE_{M_0X}$. The involvement of Spoiler in this untrimmed game is
  completely unnecessary, since Spoiler could always challenge the
  entire conformance $P$ by playing $P' = P$. This is always a
  rational strategy for Spoiler, and thus the above game may be
  rephrased as a game played by a single player Duplicator, where in
  position $P$ they must play an admissible move $P'$, and then
  continues from the position $P'$.
\end{rem}

\begin{defn}
  The \emph{graded Kleisli star} for a morphism $f \colon X
  \to M_kY$ and $n \in \nat$ is given by
  \[f^*_n := (M_nX \xrightarrow{M_nf} M_nM_kY \xrightarrow{\mu^{n,k}_Y}
  M_{n+k}Y)\]
\end{defn}

\noindent Note that the underlying morphism of the pre-determinization $
\gamma^\#$ can equivalently be written as $(\alpha \cdot \gamma)^*_0$.
The proof of our first main theorem uses the following
lemma:
\begin{lemrep}
  \label{lem:mBarKleisli}
  Let $\mathbb{M}$ be a depth-1 graded monad on $\catC$. Then
  $\mbar(f^*_0) = f^*_1$ for every $\catC$-morphism
  $f\colon X \to M_kY$.
\end{lemrep}
\begin{proof}
  The fact that $f^*_n$ is a homomorphism of algebras follows from the
  graded monad axioms and naturality of the multiplications.

  By definition, $\mbar(f^*_0) = h$ where $h\colon M_1X\to M_{k+1}Y$
  is the unique morphism (obtained by canonicity of $(M_0X, M_1X)$)
  such that $(f^*_0, h)$ is a homomorphism of $M_1$-algebras. Thus it
  suffices to show that $(f^*_0, f^*_1)$ is a homomorphism of
  $M_1$-algebras. To this end, we show the remaining condition
  \begin{equation*}
    \mu^{1,k}_{M_{1+k}Y} \cdot M_1(f^*_0) = f^*_1 \cdot \mu^{1,0}_X
  \end{equation*}
  by the calculation

  \begin{align*}
    \mu^{1,k}_{M_{k}Y} \cdot M_1(f^*_0)
      &= \mu^{1,k}_{Y} \cdot M_1\mu^{0,k}_Y \cdot M_1M_0 f& \by{def. of $(-)^*_0$}\\
      &= \mu^{1,k}_{Y} \cdot \mu^{1,0}_{M_kY} \cdot M_1M_0f &
      \by{Graded monad axioms}\\
      &= \mu^{1,k}_{Y} \cdot M_1f \cdot  \mu^{1,0}_{X} &
      \by{Naturality of $\mu^{1,0}$}\\
      &= f^*_1 \cdot \mu^{1,0}_X & \by{def. of $(-)^*_1$}
  \end{align*}

\end{proof}

\begin{thm}
  \label{thm:main-fin} Let $(\alpha, \mathbb{M})$ be a depth-1 graded
  semantics for a functor $G\colon \catE \to \catE$ such that~$\mbar$
  preserves initial arrows whose codomain is of the form $M_i1$, and
  let $(X, \gamma)$ be a $G$-coalgebra. For $n<\omega$, Duplicator wins
  a position $P \in B$ in the $n$-round graded conformance
  game iff $\eta^\bullet P \sleq P^\omega_\gamma$.
\end{thm}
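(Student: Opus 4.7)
The plan is to prove the theorem by induction on $n$, strengthening the inductive hypothesis so that it lives entirely in the fibre $\catE_{UM_0X}$. Let $\gamma^{\#n}\colon M_0X \to M_nX$ denote the $n$-fold iterate of the pre-determinization $\gamma^\#$ (with $\gamma^{\#0}=\id$). The strengthened claim is that Duplicator wins the $n$-round game at $P\in B$ iff $P\sqsubseteq(M_n!_X\cdot\gamma^{\#n})^\bullet M_n1$. Granted this, the theorem follows by reindexing along $\eta$: the identity $\gamma^{(n)}=M_n!_X\cdot\gamma^{\#n}\cdot\eta_X$ yields $\eta^\bullet(M_n!_X\cdot\gamma^{\#n})^\bullet M_n1=(U\gamma^{(n)})^\bullet M_n1$, and since $\eta^\bullet$ is a right adjoint and preserves meets, aggregating over $n$ produces exactly $\eta^\bullet P\sqsubseteq P^\omega_\gamma$.

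The base case $n=0$ is immediate, since $\gamma^{\#0}=\id$ and the bluff check is precisely $P\sqsubseteq(M_0!_X)^\bullet M_01$. For the inductive step, I plan to use the factorization $\gamma^{\#(n+1)}=\mbar\gamma^{\#n}\cdot\gamma^\#$ together with Lemma~\ref{lem:mBarKleisli} to rewrite the depth-$(n+1)$ bound as $(\gamma^\#)^\bullet\mbar\bigl((M_n!_X\cdot\gamma^{\#n})^\bullet M_n1\bigr)$. For the backward direction of the step, I would use the generating property of $B$ to express $(M_n!_X\cdot\gamma^{\#n})^\bullet M_n1$ as a join $\bigsqcup Z$ of basic conformances, whereupon admissibility $P\sqsubseteq(\mbar\iota\cdot\gamma^\#)^\bullet\mbar C(Q)$ reduces to the rewritten expression for the $(n+1)$-depth bound, and each $P'\in Z$ satisfies the IH by construction. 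For the forward direction, I would combine admissibility with the IH applied to each $P'\in Z$ to obtain a bound on $\bigsqcup Z$, then propagate this bound through $\mbar$ and precompose with $\gamma^\#$ to recover $P\sqsubseteq(M_{n+1}!_X\cdot\gamma^{\#(n+1)})^\bullet M_{n+1}1$.

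The main obstacle is reconciling the refinement $\mbar C(Q)$ appearing in the admissibility condition with the plain initial-lift expression driving the strengthened IH. The admissibility condition refers to a universally generated refinement, while the IH speaks of a straightforward reindexing along $M_n!_X\cdot\gamma^{\#n}$; bridging the two amounts to commuting $\mbar$ past an initial lift whose codomain lies above $M_n1$. This is exactly the content of the hypothesis that $\mbar$ preserves initial arrows into objects of the form $M_i1$: it ensures that $\mbar$ sends the refinement generated by a $(M_n!_X\cdot\gamma^{\#n})^\bullet M_n1$-style conformance to the refinement generated by its image under $\mbar$, closing the loop between admissibility and the strengthened IH. A secondary subtlety is that $\eta^\bullet$, being a right adjoint, need not reflect inequalities, which is why the induction is carried out in the fibre over $M_0X$ rather than over $X$, with the reindexing along $\eta$ invoked only at the very end.
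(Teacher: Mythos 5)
Your proposal follows essentially the same route as the paper's proof: the same strengthened induction hypothesis in the fibre over $M_0X$ (your $(M_n!_X\cdot\gamma^{\#n})^\bullet M_n1$ equals the paper's $((\gamma^{(n)})^*_0)^\bullet M_n1$ via Lemma~\ref{lem:mBarKleisli}), the same base case via the bluff check, the same factorization of the depth-$(n{+}1)$ bound through $\gamma^\#$ and $\mbar$, and the same use of initiality preservation into $M_i1$ to commute $\mbar$ past the reindexing and reconcile it with the refinement $\mbar C(Q)$ in the admissibility condition. The plan is correct as sketched.
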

\begin{proof}
  Let $W_n\subseteq B$ be Duplicator's winning region in the $n$-round
  graded conformance game~$\mathcal{G}_n(\gamma)$. We show that
  $P \in W_n$ if and only if
  $P \sleq ((\gamma^{(n)})^*_0)^\bullet M_n1$, by induction on
  $n$. For $n=0$, we need to show that `calling the bluff' is sound and
  complete, that is, for a conformance~$P$ on $M_0X$, we have 
  $P \sleq (M_0!_X)^\bullet M_01$ iff
  $P \sleq ((\gamma^{(0)})^*_0)^\bullet M_01$. This follows by
  the following calculation:
  \begin{equation*}
      \begin{split}
      (\gamma^{(0)})^*_0&= \mu^{0,0} \cdot
      M_0\gamma^{(0)}\\ &=\mu^{0,0}\cdot M_0(M_0!_X\cdot
      \eta_X)\\ &=\mu^{0,0}\cdot M_0M_0!_X\cdot M_0\eta_X\\ &=M_0!_X
      \end{split}
  \end{equation*}
  where the last step follows from commutativity of the following
  diagram:
  \begin{equation*}
      \begin{tikzcd}
          M_0X \arrow[r, "M_0\eta_X"] \arrow[rd, "id_{M_0X}"'] & M_0M_0X
          \arrow[r, "M_0M_0!"] \arrow[d, "{\mu^{0,0}_X}"] & M_0M_01 \arrow[d,
          "{\mu^{0,0}_1}"] \\ & M_0X \arrow[r, "M_0!"]  & M_01
      \end{tikzcd}
  \end{equation*}
  \noindent For the inductive step, assume that $P \in W_k$ if and only
  if $P \sleq ((\gamma^{(k)})^*_0)^\bullet M_k1$.

  `$\Leftarrow$': Let~$P$ be a conformance on $M_0X$ such
  that $P \sleq ((\gamma^{(k+1)})^*_0)^\bullet M_{k+1}1$. We need
  to show that Duplicator has a winning strategy at $P$ in the
  $(k+1)$-round game. Since~$B$ is generating, we can pick
  $Z \subseteq B$ such that
  $\bigsqcup Z = ((\gamma^{(k)})^*_0)^\bullet M_k1$ as Duplicator's
  move. We write $Q := \bigsqcup Z$. Note that $Q$ is a refinement on
  $M_0X$ by Lemma~\ref{lem:hom-reindxing}. For every possible subsequent
  Spoiler move $P' \in Z$, we
  then have immediately that
  $P' \sleq ((\gamma^{(k)})^*_0)^\bullet M_k1$, so by induction
  hypothesis $P' \in W_k$. Therefore we just need to show that $Z$ is
  an admissible move for Duplicator, i.e. we want to show that
  $P \sleq (\mbar\iota_{Q} \cdot \gamma^\#)^\bullet \mbar
  C(Q)$.  Since by definition of $Q$, the arrow
  $(\gamma^{(k)})^*_0: Q \to M_k1$ is
  initial  (where $(\gamma^{(k)})^*_0: M_0X \to M_k1$ factors
  through $Q$ via $\iota_{Q}$ ), and $\mbar$ preserves that initial
  arrow,
  $\mbar(\gamma^{(k)})^*_0$ is also initial.  The top diagram
  \begin{equation*}
    \begin{tikzcd}[column sep=large, row sep=large]
      UM_0X  \arrow[r, "U\gamma^\#"] \arrow[rrd,
      "U(\gamma^{(k+1)})^*_0"'] & U\mbar M_0X \arrow[r,
      "U\mbar \iota_{Q}"] & U\mbar Q \arrow[d, "U\mbar(\gamma^{(k)})^*_0"] \\
      & &U \mbar M_k1
      &\\
      P  \arrow[rr, dashrightarrow, "\exists!"]
      \arrow[rrd, "(\gamma^{(k+1)})^*_0"'] & & \mbar Q \arrow[d, "\mbar(\gamma^{(k)})^*_0"] \\
      & &\mbar M_k1                                                                   &
    \end{tikzcd}
  \end{equation*}
  \noindent  commutes because
  \begin{equation*}
    \begin{aligned}
      &\mbar((\gamma^{(k)})^*_0 \cdot \iota_{Q}) \cdot
      \gamma^\#\\
      = &\mbar((\gamma^{(k)})^*_0) \cdot
      (\alpha \cdot \gamma)^*_0 \\
      = &(\gamma^{(k)})^*_1 \cdot (\alpha \cdot \gamma)^*_0 & \by{Lemma~\ref{lem:mBarKleisli}}\\
      = &((\gamma^{(k)})^*_1 \cdot (\alpha \cdot \gamma))^*_0 & \by{*}\\
      = &(\gamma^{(k+1)})^*_0 & \by{def. of $\gamma^{(k+1)}$}
    \end{aligned}
  \end{equation*}

  \noindent where (*) is by the standard associative law of the Kleisli
  star ($(f^*_n \cdot g)^*_m = f^{*}_{n+m} \cdot g^*_m$).
  
  So since
  $h\colon \mbar Q
  \to \mbar M_k1$ is initial, there must be, by the situation
  described in the diagrams, an arrow above
  $U(\mbar \iota_{Q}\cdot \gamma^\#)$, that is
  $\mbar \iota_{Q}\cdot \gamma^\#\colon P \to \mbar Q$ is a morphism
  in $\catE$. This
  implies that
  $P \sleq (\mbar \iota_Q \cdot \gamma^\#)^\bullet \mbar Q$.
  
  `$\Rightarrow$': Suppose that $P \in W_{k+1}$, i.e.\ Duplicator has a
  winning move $Z\subseteq B$ at~$P$. We again define $Q := \bigsqcup
  Z$. This move is admissible, i.e.
  $P \sleq (\mbar \iota_{Q} \cdot \gamma^\#)^\bullet \mbar C(
  Q)$, so
  $\mbar \iota_{Q} \cdot \gamma^\# \colon(M_0X,P)\to\mbar C(
  Q)$ is a morphism in~$\catE$. We have to show that
  $P \sleq (\gamma^{(k+1)})^*_0 M_{k+1}1$.  Since~$Z$ is winning,
  we have $P' \in W_k$ for all possible Spoiler moves $P' \in Z$,
  whence $P' \sleq ((\gamma^{(k)})_0^*)^\bullet M_k1$ by
  induction; therefore,
  $Q \sleq ((\gamma^{(k)})_0^*)^\bullet M_k1$. Thus we
  have a morphism
  $(\gamma^{(k)})^*_0 \colon C(Q) \to M_k1$ in
  $\catE$; that is, we have the situation in the following
  diagram (which commutes as already noted above):
  \begin{equation*}
    \begin{tikzcd}[column sep=large, row sep=large]
      P \arrow[r, "\gamma^\#"] \arrow[rrd,
      "(\gamma^{(k+1)})^*_0"'] & \mbar M_0X \arrow[r, "\mbar
      \iota_{Q}"] & \mbar C(Q)
      \arrow[d, "\mbar(\gamma^{(k)})^*_0"] \\
      & & M_{k+1}1 = \mbar M_k1&
    \end{tikzcd}
  \end{equation*}
  \noindent We know that the morphism on top and right are morphisms
  in $\catE$, and therefore so is their
  composite~$(\gamma^{(k+1)})^*_0$. This implies that
  $P \sleq ((\gamma^{(k+1)})^*_0)^\bullet M_{k+1}1$.
\end{proof}

\noindent Preservation of initial arrows is a crucial condition in the
proof of Theorem~\ref{thm:main-inf}, but this can be hard to show when the
codomain is an arbitrary algebra $(A,a)$, as it is not always easy to give a concrete description of the 
conformance on $\mbar (A,a)$. Restricting to arrows with codomain
$M_n1$ can simplify these types of proofs, since $\mbar M_n1 =
M_{n+1}1$.

The game can also be played without a bound on the number of rounds,
to characterize the infinite-depth graded behavioural conformance.
In this case, Duplicator wins infinite plays. This however requires
preservation of all initial arrows.

\begin{thm}
  \label{thm:main-inf}
  Let $(\alpha, \mathbb{M})$ be a depth-1 graded semantics for a
  functor $G\colon \catE \to \catE$ such that~$\mbar$ preserves
  initial arrows, and let $(X, \gamma)$ be a $G$-coalgebra.
  Duplicator wins a position $P \in B$ in the infinite graded
  conformance game iff
  $P \sleq P_{\gamma^\#}^\infty$.
\end{thm}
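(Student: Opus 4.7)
The plan is to characterize $P^\infty_{\gamma^\#}$ as the greatest fixed point of a monotone operator on $\catE_{UM_0X}$ encoding the one-step admissibility condition of the game, and then to read off Duplicator's winning region from that characterization. Concretely, I would define
\[
  \Phi(P) \;=\; (\mbar \iota \cdot \gamma^\#)^\bullet\, \mbar C(P),
\]
where $\iota\colon LX \to C(P)$ is the identity-carried refinement inclusion, so that a Duplicator move $Z \subseteq B$ at position $P$ is admissible precisely when $P \sleq \Phi(\bigsqcup Z)$. Monotonicity of $\Phi$ follows from monotonicity of $C(-)$, functoriality of $\mbar$, and monotonicity of reindexing; hence $\Phi$ has a greatest fixed point $\mathrm{gfp}(\Phi)$ by Knaster--Tarski.

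The main obstacle is to prove $P^\infty_{\gamma^\#} = \mathrm{gfp}(\Phi)$. For $P^\infty_{\gamma^\#} \sleq \mathrm{gfp}(\Phi)$, I would show that $P^\infty_{\gamma^\#}$ is a post-fixed point. Being a meet of reindexings $h^\bullet Y$ along $\mbar$-coalgebra homomorphisms $h\colon LX \to Y$, it is itself a refinement of $LX$ by Lemma~\ref{lem:hom-reindxing}, so $C(P^\infty_{\gamma^\#}) = P^\infty_{\gamma^\#}$. Preservation of all initial arrows by $\mbar$ then commutes $\mbar$ past this meet, giving $\mbar P^\infty_{\gamma^\#} = \bigsqcap_h (\mbar h)^\bullet \mbar Y$; combining each coalgebra identity $\mbar h \cdot \gamma^\# = \zeta_Y \cdot h$ with the fact that $\zeta_Y \cdot h$ is a morphism $(LX, h^\bullet Y) \to \mbar Y$ yields $h^\bullet Y \sleq (\gamma^\#)^\bullet (\mbar h)^\bullet \mbar Y$ for each $h$, and taking meets gives $P^\infty_{\gamma^\#} \sleq (\gamma^\#)^\bullet \mbar P^\infty_{\gamma^\#} = \Phi(P^\infty_{\gamma^\#})$. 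For the reverse inclusion, let $P \sleq \Phi(P)$ be any post-fixed point; by Lemma~\ref{lem:hom-reindxing} applied to the $M_0$-algebra homomorphism $\gamma^\#$, $\Phi(P)$ is already a refinement, so $C(\Phi(P)) = \Phi(P) = \Phi(C(P))$, and applying the monotone $C(-)$ to $P \sleq \Phi(P)$ yields $C(P) \sleq \Phi(C(P))$. This makes $(C(P), \gamma^\#)$ a $\mbar$-coalgebra in $\EM(M_0)$ with the identity-carried map $LX \to C(P)$ a coalgebra homomorphism, and the definition of $P^\infty_{\gamma^\#}$ as an initial lift over such homomorphisms gives $C(P) \sleq P^\infty_{\gamma^\#}$, whence $P \sleq P^\infty_{\gamma^\#}$.

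The game equivalence now follows without induction on round count. For the ``if'' direction, Duplicator's strategy at any position $P \sleq P^\infty_{\gamma^\#}$ is to play $Z \subseteq B$ with $\bigsqcup Z = P^\infty_{\gamma^\#}$, which exists because $B$ is generating; admissibility holds since $P \sleq P^\infty_{\gamma^\#} = \Phi(P^\infty_{\gamma^\#}) = \Phi(\bigsqcup Z)$, and each Spoiler response $P' \in Z$ still satisfies $P' \sleq P^\infty_{\gamma^\#}$, so the strategy is self-sustaining over infinite plays. For the converse, let $W \subseteq B$ be Duplicator's winning region: at each $Q \in W$, a winning move $Z_Q \subseteq W$ satisfies $Q \sleq \Phi(\bigsqcup Z_Q) \sleq \Phi(\bigsqcup W)$ by monotonicity of $\Phi$; taking joins yields $\bigsqcup W \sleq \Phi(\bigsqcup W)$, so $\bigsqcup W \sleq \mathrm{gfp}(\Phi) = P^\infty_{\gamma^\#}$, and every $P \in W$ satisfies $P \sleq P^\infty_{\gamma^\#}$ as required.
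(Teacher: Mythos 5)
Your proposal is essentially a Knaster--Tarski repackaging of the paper's own argument: the paper's ($\Rightarrow$) direction shows exactly that $C(\bigsqcup W_\omega)$ is a post-fixed point of your operator $\Phi$ and therefore carries an $\mbar$-coalgebra structure making the identity-carried map a coalgebra homomorphism (your ``every post-fixed point is below $P^\infty_{\gamma^\#}$'' combined with your safety-game computation $\bigsqcup W \sleq \Phi(\bigsqcup W)$), while the paper's ($\Leftarrow$) direction shows that $(\downarrow h^\bullet C)\cap B$ is a Duplicator invariant for each coalgebra homomorphism $h$ (your ``$P^\infty_{\gamma^\#}$ is a post-fixed point'' plus the self-sustaining strategy). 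The game-theoretic halves of your argument are correct and match the paper's in substance; the fixpoint formulation is a legitimate and arguably cleaner way to organize them.

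There is, however, one genuine gap, in your proof that $P^\infty_{\gamma^\#}$ is a post-fixed point. You commute $\mbar$ past the meet, asserting $\mbar(\bigsqcap_h h^\bullet Y)=\bigsqcap_h(\mbar h)^\bullet\mbar Y$ from ``preservation of initial arrows''. The inequality you actually need, $\bigsqcap_h(\mbar h)^\bullet\mbar Y\sleq\mbar(\bigsqcap_h h^\bullet Y)$, is preservation of the initial \emph{source} $(h)_{h}$, which is strictly stronger than preservation of initial single morphisms; only the latter is assumed in the theorem, and only the latter is what the paper verifies in its instances (e.g.\ order-reflection of $\mbar f$ for a single $f$, or the coupling computation for a single initial $h$ in the probabilistic case). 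The paper sidesteps this by never forming the meet: it fixes one homomorphism $h$ at a time and uses initiality of the single arrow $\mbar\overline h$ to show that everything below $h^\bullet C$ is a Duplicator invariant. Your argument can be repaired the same way: show that each individual $h^\bullet Y$ is a post-fixed point of $\Phi$ (this needs only single-arrow preservation) and conclude $h^\bullet Y\sleq\mathrm{gfp}(\Phi)$ for every $h$, which is all the ``if'' direction of the game equivalence requires. A related point to watch: in part (a) you treat $P^\infty_{\gamma^\#}$ as the meet $\bigsqcap_h h^\bullet Y$ and take meets, but in part (b) you conclude $C(P)\sleq P^\infty_{\gamma^\#}$ from the existence of a \emph{single} homomorphism into $C(P)$, which presupposes the join-style reading under which each $h^\bullet Y$ lies below $P^\infty_{\gamma^\#}$; the paper's ($\Rightarrow$) direction reads the definition the same way, but your two halves are using different conventions and you should fix one.
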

\begin{proof}
  ($\Rightarrow$) We prove the claim for an arbitrary $\mbar$ coalgebra,
  carried by an $M_0$-algebra $(A,a)$. Denote by $W_\omega \subseteq B$ the winning region
  of Duplicator in the infinite game. Then $P := C(\bigsqcup W_\omega)$ is
  a refinement of the
  $M_0$-algebra $(A,a)$, so we have an arrow $p\colon M_0
  P \to P$. We now need to show that $\mbar \iota_P \cdot \gamma^\#
  \colon P \to \mbar P$ is a morphism in $\mathcal{E}$, i.e.
  that the dashed arrow in the diagram below exists. This then
  implies the claim.
  \begin{equation*}
    \begin{tikzcd}[column sep=large, row sep=large]
      (A,a) \arrow[r, "\gamma^\#"] \arrow[d, "\iota_P"] & \mbar (A,a)
      \arrow[d, "\mbar \iota_P"]\\
      (P,p) \arrow[r, dashed, "\gamma^\#_{/P}"] & \mbar(P, p)
    \end{tikzcd}
  \end{equation*}

\noindent Let $Q\in W_\omega$. Then Duplicator has an admissible move $Z$ at
  $Q$, such that every answer by Spoiler is again winning for
  Duplicator, i.e. such that $Z \subseteq W_\omega$. By admissibility we
  have that $Q
  \sleq (\mbar \iota_Z \cdot \gamma^\#)^\bullet C(Z) \sleq (\mbar
  \iota_Z \cdot \gamma^\#)^\bullet P$. Therefore $P = C(\bigsqcup_{Q \in
  W_\omega} Q) \sleq (\mbar
  \iota_Z \cdot \gamma^\#)^\bullet P$. From this it follows that
  $\mbar \iota_P \cdot \gamma^\# \colon (P, a) \to \mbar(P, a)$ is a
  morphism in $\catE$.

  ($\Leftarrow$) Let $h\colon (A, \gamma^\#) \to (C, \delta)$ be a
  homomorphism of coalgebras. We have to show that any basis element
  below $Q := h^\bullet C$ is winning for Duplicator, i.e. that $(\downarrow
  h^\bullet C) \cap B \subseteq W_\omega$, where $\downarrow h^\bullet
  C$ denotes the downclosed subset of $\catE_{UA}$ generated by
  $h^\bullet C$. We show that Duplicator can maintain the invariant
  $(\downarrow h^\bullet C) \cap B$, i.e. for any move starting at a
  position $P \in (\downarrow h^\bullet C) \cap B$, Duplicator can force
  their next position to also be from $(\downarrow h^\bullet C) \cap B$.
  This is obviously the case if Duplicator plays $Z = (\downarrow
  h^\bullet C) \cap B$ as their move. It now only remains to show that
  $Z$ is admissible, that is $P \sleq (\mbar\iota \cdot
  \gamma^\#)^\bullet C(Z)$ for all $P \in Z$. Note that by
  Lemma~\ref{lem:hom-reindxing}, we have that $Q  = \bigsqcup Z$ is a
  refinement of $A$.

 \begin{equation*}
    \begin{tikzcd}[column sep=large, row sep=large]
      UQ =  UA \arrow[r, "U\gamma^\#"] \arrow[d, "Uh"] & U\mbar A
      \arrow[d, "U\mbar h"] \arrow[r, "U\mbar \iota"] &U \mbar Q
      \arrow[dl, "U\mbar \overline h"]\\
      UC \arrow[r, "\delta"] & U\mbar C
    \end{tikzcd}
 \end{equation*}

 \begin{equation*}
    \begin{tikzcd}[column sep=large, row sep=large]
      Q \arrow[r, dashed, "\mbar \iota \cdot \gamma^\#"] \arrow[d,
      "\overline h"] & \mbar Q
      \arrow[d, "\mbar \overline h"]\\
      C \arrow[r, "\delta"] & \mbar C
    \end{tikzcd}
 \end{equation*}

\noindent Consider the above diagrams, the top in $\Set$ and on the
bottom in
  $\catE$. From the initiality of $\mbar \overline h$, it follows that
  the dashed morphism $Q \to \mbar Q$ above $U(\mbar \iota \cdot
  \gamma^\#)$ exists, that is, that $Q \sleq
  (\mbar\iota \cdot \gamma^\#)^\bullet Q$, implying that
  $P \sleq \bigsqcup Z = Q \sleq (\mbar\iota \cdot
  \gamma^\#)^\bullet Q = (\mbar\iota \cdot \gamma^\#)^\bullet C(Z)$.
\end{proof}

\noindent The games above are played on conformances on the pre-determinization. To
determine whether a conformance holds in the base space $X$, one must first
transform it to a conformance on $M_0X$.

\begin{correp}
  Let the setting be as in Theorem~\ref{thm:main-fin} and
  Theorem~\ref{thm:main-inf}. Let $P \in \catE_{UX}$.
  \begin{enumerate}
    \item For $n<\omega$, Duplicator wins all positions
  $P' \in B$ such that $P' \sleq \eta_\bullet P$ in the graded
  $n$-round conformance game iff
  $P \sleq (\gamma^{(n)})^\bullet {M_n1}$.
    \item Duplicator wins  all
  positions $P' \in B$ such that $P' \sleq \eta_\bullet P$ in the
  infinite graded conformance game iff $P \sleq P^\infty_\alpha$.
  \end{enumerate}
\end{correp}
\begin{proof}
  Immediate by Theorem~\ref{thm:main-fin} and
  Theorem~\ref{thm:main-inf} and from the fact that $\eta_\bullet \dashv \eta^\bullet$.
\end{proof}

\begin{expl}
  \label{expl:games}
  \begin{enumerate}[wide]
    
  \item For trace inclusion semantics
    (Example~\ref{expl:graded-semantics}.\ref{item:trinc}), the
    functor~$\mbar$ can be written as $\mbar X = X^\act$ by distributing
    joins over actions, with the ordering on~$X^\act$ defined
    pointwise.  Since initiality on $\Pre$ corresponds to order
    reflection, it is follows that $\mbar f$ is initial whenever~$f$
    is.
    
    Let $\gamma\colon X \to \overline P_\omega(\act \times X)$ be a
    finitely branching labelled transition system. As the set
    $B \subseteq \Pre_{\Pow X}$ of basic conformances, we choose
    those preorders that relate precisely one pair of distinct
    elements of $\Pow X$.  To check trace inclusion $x\leq y$ for two
    states $x, y\in X$, we first need to transform this inequality to
    live on the carrier of the pre-determinization $\Pow X$ by forming
    the pushforward $\eta_\bullet P$, where $P$ is the preorder on $X$
    that is discrete, except for the inequality $x \leq y$.  The
    resulting preorder is the preorder on $\Pow X$ that is again
    discrete, except for the inequality $\{x\} \leq \{y\}$.

    The game then continues as follows: In position $A \leq B$, with sets
    $A, B \in \Pow X$, the one step behaviours of states in the
    determinization are sets
    $\gamma^\#(A) = \bigcup_{x\in A} \gamma(x) = \{(a_1,x_1), \ldots,
    (a_n,x_n)\}$ and
    $\gamma^\#(B) = \bigcup_{y\in B} \gamma(y) = \{(b_1, y_1), \ldots,
    (b_m,y_m)\}$. Duplicator then claims a set $Z$ of inequalities
    between sets consisting of the $x_i$ and $y_i$, respectively, that
    would imply $\gamma^\#(A) \leq \gamma^\#(B)$ in the ordering on
    $M_1X$, which intuitively means that
    for each $a \in \act$, the union of all trace sets of states $x_i$
    such that $a_i = a$ is included in the union of trace sets of states
    $y_i$ such that $b_i = a$. Spoiler can then challenge any of the claims
    $P\in Z$ made by Duplicator, by playing it as the new
    position. Note that Duplicator may restrict to playing one
    inequality $\{x_i\}\le S_i$ for each~$i$. Then, the order of play
    may in fact be reversed to let Spoiler move first as in the
    standard bisimulation game: Spoiler picks~$x_i$, and then
    Duplicator answers with~$S_i$.

    Note that Duplicator could in any position just claim that
    $S \leq \emptyset$ for some number of subsets $S \subseteq X$,
    which would allow them to easily construct an admissible
    move. Such a strategy would however lose after the final turn
    (unless really $S=\emptyset$), since calling the bluff would
    reveal that $\{*\} \not \le \emptyset$
  \item For probabilistic trace distance
    (Example~\ref{expl:graded-semantics}.\ref{item:qtrace}), %
    read elements of $\mbar A$ as distributions
    $\mu \in \overline \Prob(\act \times A)$ such that for each
    $a\in \act$, one has $\mu(a, x) \not = 0$ for at most one
    $x \in A$.  %
    Preservation of initial arrows by
    $\mbar$ then follows from preservation by $\overline \Prob$, which
    is well-known. Fix a
    labelled Markov chain
    $\gamma\colon X \to \overline \Prob (\act \times X)$. We first
    choose as the basis $B \subseteq \PMet_{\Prob X}$ the set of all
    metrics on $\Prob X$ that are discrete, except at one pair of
    distinct distributions. To play the game that tests for two states
    $x, y\in X$ having distance $\leq \epsilon$, we first construct
    the pushforward along $\eta$, giving us a metric that is discrete,
    except for $d(\mu_x, \mu_y) = \epsilon$, where
    $\mu_x(x) = \mu_y(y) = 1$. The one-step behaviour $\gamma^\#(\mu)$
    of a distribution $\mu \in \Prob X$ is a distribution on
    $\act \times X$, where
    $\gamma^\#(\mu)(a,x) = \sum_{y \in X} \mu(y)\gamma(y)(a, x)$.

    To play the game in the position that relates distributions $\mu$
    and $\nu$ at distance~$\delta$, Duplicator then claims a set of
    distances to hold between elements of $X$ that would imply
    $d(\gamma^\#(\mu), \gamma^\#(\nu)) \leq \delta$, when the distance
    is calculated as in $\overline \Prob(\act \times X)$. Spoiler then
    challenges one of the distances claimed by Duplicator and the game
    continues in this new position.
  \end{enumerate}
\end{expl}

\begin{toappendix}
\subsection*{Details for Example~\ref{expl:games}}
  We show explicitly that the functor $\mbar$ for
  $\mathbb{M}_{\ptrace}$ preserves initial morphisms with codomain
  $M_n1$. Let $(A,a)$ be a $\overline \Prob$ algebra, and let $h\colon
  A\to M_n1$ be an initial algebra homomorphism, that is, $d(x,y) =
  d(h(x), h(y))$ for all $x, y\in A$. Let $s,t\in \mbar A$, then the
  elements
  $s, t$ can be viewed as distributions of the form $\Prob
(\act \times A)$, where for all $a \in \act$ we have that $s(a, x)
\not = 0$ for at most one $x$. We need to show that $d(s, t)\leq
d(\mbar h(s), \mbar h(t))$. For this purpose, we are done once we show
that the distance $d(\mbar h(s), \mbar h (t))$ when calculated in the form $\overline \Prob(\act
\times \overline \Prob(\act^{n}))$ is equal to the
distance in the form $\Prob(\act ^{ n+1})$. Assume
that $\mu,\nu$ are given as a distribution of the prior form, and that
$\mu',
\nu'$ are given as the latter form. Then the distance $d(\mu,\nu)$ is
witnessed by a coupling $\rho\in \mathsf{Cpl}(\mu,\nu)$, which can be
viewed as a convex combination of the form $\sum_{I}\lambda_i((a_i, u_i), (b_i,
v_i))$, where $d(\mu, \nu) = \sum_{I}\lambda_i \max(d(a_i, b_i), d(u_i, v_i))$,
and for all $u_i,v_i\in \Prob(\act ^{\boxplus n})$ we also have that
the distance is witnessed by a coupling $\rho_i = \sum_{J_i}
\lambda_{i,j}(u_{i,j},
v_{i,j})$. From this data we can now construct a coupling of the
distributions $\mu',\nu'$, given by $\rho'=
\sum_{I,J_i}\lambda_i\lambda_{ij}((a_i, u_{i,j}), (b_i, v_{i,j}))$.
We then have:

\allowdisplaybreaks

\begin{align*}
  d(\mu, \nu) &= \sum_{I}\lambda_i \max(d(a_i, b_i), d(u_i, v_i))\\
          &= \sum_{I}\lambda_i \max\left( d(a_i, b_i), \sum_{J_i}
            \lambda_{i,j} d(u_{i,j},
            v_{i,j}) \right)\\
           &= \left( \sum_{i\in I, a_i \not = b_i} \lambda_i
           1 \right) + \sum_{i\in I, a_i =
         b_i}\lambda_i \sum_{j\in J_i}
            \lambda_{i,j} d(u_{i,j},
            v_{i,j})\\
           &= \left( \sum_{i\in I, a_i \not = b_i} \lambda_i
           \sum_{j\in J_i} \lambda_{i,j}1 \right) + \sum_{i\in I, a_i =
         b_i}\lambda_i \sum_{j\in J_i}
            \lambda_{i,j} d(u_{i,j},
            v_{i,j})\\
          &= \sum_{I} \left( \sum_{J_i}
            \lambda_i\lambda_{i,j}  \max(d(a_i, b_i), d(u_{i,j},
          v_{i,j})) \right)\\
          &\ge d(\mu',\nu')
\end{align*}

\noindent On the other hand, let $\rho' \in \mathsf{Cpl}(\mu',\nu')$ be the coupling
witnessing the distance $d(\mu',\nu')$, given by the convex combination
$\sum_{I} \lambda_i((a_i, u_i), (b_i, v_i))$. We construct a
coupling $\rho$ of $\mu, \nu$, given by the convex combination
$\sum_{a ,b\in \act} \lambda_{a,b}((a, u_{a,b}), (b, v_{a,b}))$
where
$\lambda_{a,b} = \sum_{a_i = a, b_i=b} \lambda_i$
and $u_{a,b}$ is
given by the formal convex combination $u_{a,b} = \sum_{a_i = a,
b_i=b} \frac{\lambda_i}{\lambda_{a,b}} u_i$ (and
symmetrically for $v_{a,b}$). We then have
\begin{align*}
  d(\mu', \nu') &= \sum_{i\in I}\lambda_i \max(d(a_i,u_i), d(b_i, v_i))\\
                &= \left( \sum_{a_i \not = b_i} \lambda_{i}1 \right) +
                \sum_{a_i = b_i} \lambda_i d(u_i, v_i)\\
                &= \left( \sum_{a \not = b} \lambda_{a,b}1 \right) +
                \sum_{a\in \act} \lambda_{a,a}\sum_{a_i = a}
                \frac{\lambda_i}{\lambda_{a,a}} d(u_i, v_i)\\
                &= \left( \sum_{a \not = b} \lambda_{a,b}1 \right) +
                \sum_{a\in \act} \lambda_{a,a}\sum_{a_i = a}
                \frac{\lambda_i}{\lambda_{a,a}} d(u_i, v_i)\\
                &\ge \left( \sum_{a \not = b} \lambda_{a,b}1 \right) +
                \sum_{a\in \act} \lambda_{a,a} d(u_{a,a}, v_{a,a})\\
                &= \sum_{a,b\in \act} \lambda_{a,b}
                \max(d(a,b),d(u_{a,b}, v_{a,b}))\\
                &\ge d(\mu,\nu)
\end{align*}
So, in conclusion, $d(\mu, \nu) = d(\mu', \nu')$.
\end{toappendix}

\section{Syntactic Perspectives via Relational Structures}
\label{sec:rel-algebra}

\noindent In previous work, restricted to the category of sets
\cite{DBLP:conf/lics/FordMSB022}, graded equational games were viewed
from a syntactic perspective, as witnessing a proof in graded equational logic: The one-step behaviours
$\gamma^\#(x), \gamma^\#(y) \in M_1X$ of states $x,y$ may each be
viewed as a term in a graded equational theory. Duplicator then plays
a set $Z$ of equalities of depth-0 terms, under which they are able to
prove that the one step behaviour of $x$ is equivalent to the one step
behaviour of $y$, i.e. $Z \vdash \gamma^\#(x) = \gamma^\#(y)$.
Spoiler then challenges one of the claimed equalities in~$Z$ and the
game continues in this new position. In our present setting, we extend
this syntactic perspective to topological categories given as
categories of relational structures, such as pseudometric spaces,
employing (for simplicity, a restricted version of) a previous
syntactic treatment of monads~\cite{DBLP:conf/calco/FordMS21} and
graded monads~\cite{DBLP:phd/dnb/Ford23} on such
categories. In our extended setting, game positions then are claims
about relations between states, $\pi(x_1, \ldots, x_n)$, with
Duplicator playing sets of such claims, implying that the current
position holds after taking one step in the coalgebra.

\subsubsection*{Categories of Relational Structures}\;By
\emph{relational structures}, we are more precisely referring to
structures given by any number of finitary relations, axiomatized by
infinitary Horn axioms. Formal definitions are as follows.
\begin{defn}
  \begin{enumerate}[wide]
  \item A \emph{relational signature} $\Pi$ consists of a set of
    \emph{relation symbols}, equipped with finite arities
    $\ar(\pi) \in \nat$ for $\pi \in \Pi$. A \emph{$\Pi$-edge} in a
    set~$X$ then has the form $\pi(x_1, \ldots, x_n)$ where
    $\pi\in\Pi$, $\ar(\pi) = n$, and $x_1,\dots,x_n \in X$. For a
    $\Pi$-edge $e = \pi(x_1, \ldots, x_n)$ in $X$ and a map
    $h\colon X \to Y$, we write $h \cdot e$ for the edge
    $\pi(h(x_1), \ldots, h(x_n))$ in $Y$. If $E$ is a set of edges
    over~$X$, then we define $h\cdot E = \{ h\cdot e \mid e \in
    E\}$. A \emph{$\Pi$-structure} is then a pair $(X, E)$ where~$X$
    is a set and~$E$ is a set of $\Pi$-edges in~$X$. We often call
    $\Pi$-structures just~$X$, and then write $E(X)$ for the set
    of edges of~$X$.  A morphism $g\colon (X, E)\to (Y, F)$ of
    $\Pi$-structures is a map $g\colon X \to Y$ that preserves edges,
    i.e.\ $g \cdot e \in F$ whenever $e\in E$. The collection of all
    $\Pi$-structures and their morphisms forms a category $\str(\Pi)$.

  \item Fix a set~$V$ of variables. A \emph{Horn clause} over~$\Pi$
    and~$V$ is a pair consisting of a set~$\Phi$ of $\Pi$-edges
    over~$V$ and a $\Pi$-edge~$\psi$ over~$V$, denoted
    $\Phi \Rightarrow \psi$. A \emph{Horn theory}
    $\mathcal{H} = (\Pi, \pazocal{A})$ then consists of a relational
    signature $\Pi$ and a set $\pazocal{A}$ of Horn clauses
    over~$\Pi$, which we refer to as the \emph{axioms}
    of~$\mathcal{H}$. A $\Pi$-structure $(X, E)$ \emph{satisfies} a
    Horn clause $\Phi \Rightarrow \psi$ if whenever
    $\kappa\cdot\Phi\subseteq E$ for a map $\kappa \colon V \to X$,
    then $\kappa \cdot \psi \in E$. Now let $\mathcal{H} $ be a Horn
    theory. A \emph{model} of $\mathcal{H}$ is a $\Pi$-structure
    $(X, E)$ that satisfies all axioms of~$\mathcal H$. We denote the
    full reflective subcategory of $\Str(\Pi)$ spanned by the models
    of $\mathcal{H}$ by $\Str(\mathcal{H})$.
  \end{enumerate}
\end{defn}

\begin{figure*}[h]
\begin{gather*} %
  (\mathsf{Mor})\;\frac{\{X \vdash_k \pi(t_{1j}, \ldots, t_{nj}) \mid j
      = 1, \ldots,
    \ar(\sigma)\}}{X \vdash_{\depth(\sigma)+k} \pi(\sigma(t_{11}, \ldots,
      t_{1\;\ar(\sigma)}), \ldots,
    \sigma(t_{\ar(\pi) 1}, \ldots, t_{\ar(\pi)\ar(\sigma)}))}\;(*)\\[1em]
  (\mathsf{RelAx})\;\frac{\{X \vdash_k \tau \cdot \phi \mid \phi \in \Phi
  \} }{X \vdash_k \pi(\tau(x_1), \ldots, \tau(x_n))}\; {(\Phi \Rightarrow \pi(x_1, \ldots,
  x_n)
\in \overline{\pazocal{A}}, \atop \tau\colon \var \to T_{\Sigma,k}(X))}\\[1em]
  (\mathsf{Ax})\;\frac{\{X \vdash_{k} \pi(\tau(y_1), \ldots, \tau(y_i))
      \mid \pi(y_1, \ldots, y_i) \in
  Y\}}{X \vdash_{m+k} \rho(\tau(x_1), \ldots, \tau(x_j))}\;(+)
  \\[1em]
  (\mathsf{Ctx})\;\frac{}{X \vdash_0 \pi(x_1, \ldots, x_n)}\;(\pi(x_1,
  \ldots, x_n) \in E(X)) 
\end{gather*}
\caption{The rules of deduction in graded relational logic}
\end{figure*}

\begin{expl}
   \label{expl:rel-categories}
   \begin{enumerate}[wide]
     \item For $\Pi = \emptyset$ and $\pazocal{A} = \emptyset$, the
       category $\Str(\mathcal{H})$ is isomorphic to $\Set$.
     \item \label{expl:preorders-relational} Let $\Pi = \{\leq\}$ and
       $\pazocal{A}$ consist of the clauses
        \begin{equation*}
          \Rightarrow x\leq x\qquad x \leq y, y \leq z \Rightarrow x\leq z
        \end{equation*}
        The category $\Str(\mathcal{H})$ is then isomorphic to the
        category $\Pre$ of preordered sets and order preserving maps.
      \item For
        $\Pi = \{{=_\epsilon} \mid \epsilon \in [0,1] \subseteq
        \mathbb{R}\}$, and $\pazocal{A}$ consisting of the axioms
        \begin{gather*}
          x =_\epsilon y \Rightarrow y =_\epsilon x \qquad \Rightarrow x =_0 x \qquad x =_\epsilon y \Rightarrow
          x=_{\epsilon \oplus \epsilon' } y \\
          x =_\epsilon y, y =_\delta z \Rightarrow x =_{\epsilon
          \oplus \delta} z \qquad
          \{x =_{\epsilon'} y \mid{\epsilon'} > \epsilon\} \Rightarrow x
          =_\epsilon y,
        \end{gather*}
        where $a\oplus b = \min(1, a+b)$, the category $\Str(\mathcal{H})$ is isomorphic to the category
        $\PMet$ of pseudometric spaces and nonexpansive
        maps.
   \end{enumerate}
 \end{expl}
\begin{rem}
  In comparison to the more general
  setting~\cite{DBLP:conf/calco/FordMS21}, we disallow equality in
  Horn clauses to ensure that $\Str(\mathcal{H})$ is topological
  over~$\Set$. Effectively, we thus exclude separation conditions;
  that is, we exclude metric spaces (which would need a Horn clause
  concluding $x=y$ from $x=_0y$) but allow pseudometric spaces, and we
  exclude partial orders (which need the antisymmetry axiom
  $\{x\le y, y\le x\}\Rightarrow x=y$) but allow preorders. Indeed,
  behavioural conformances do not usually satisfy such separation
  axioms; e.g.\ distinct states may be mutually similar or have
  behavioural distance~$0$.

  Rosick\'y~\cite{ROSICKY1981309} shows that models of Horn theories
  of this form are topological categories, and conversely that for a more
  general definition of Horn theory, admitting class-sized relational
  signatures and relational symbols of arbitrary cardinal-valued
  arity, all topological categories can be represented by Horn
  theories. For instance, the category of topological spaces and continuous maps,
  which does not have a small presentation, can be axiomatized via ultrafilter convergence.
\end{rem}

\subsubsection*{Graded Relational-algebraic Theories}\;
We next recall a syntactic presentation of graded monads on categories
of relational
structures~\cite{DBLP:conf/calco/FordMS21,DBLP:phd/dnb/Ford23}. For
simplicity, we keep to operations whose arities are just natural
numbers instead of more complicated objects, while contexts of axioms
will be objects of $\str(\Pi)$.
\begin{defn}
  A \emph{graded signature} consists of a set $\Sigma$ of operation
  symbols $\sigma$, each with an associated \emph{depth}
  $\depth(\sigma)$ and
  \emph{arity} $\ar(\sigma)$, which are both natural numbers.
  
    \end{defn}

\begin{defn}
  Let $\Sigma$ be a graded signature and~$X$ a set. We inductively
  define the sets $T_{\Sigma,n}(X)$ of \emph{$\Sigma$-terms of uniform
    depth $n$}:
  \begin{itemize}
    \item For $x\in X$, we have $x\in T_{\Sigma,0}(X)$.
    \item For $\sigma \in \Sigma$ with $\ar(\sigma) = n$ and $k\in
      \nat$,
      we have
      $\sigma(t_1, \ldots, t_n) \in T_{\Sigma,k+\depth(\sigma)}(X)$
      whenever $t_i\in T_{\Sigma,k}(X)$ for $i=1,\dots,n$.
  \end{itemize}
  A \emph{uniform depth-$k$ substitution} is a function $\tau \colon
  X\to T_{\Sigma,k}(Y)$. Then $\tau$ extends in the obvious way to a function on terms
  $\bar\tau_m\colon T_{\Sigma,m}(X) \to T_{\Sigma,m+k}(Y)$.

  \end{defn}

\noindent The axioms and judgements in our system of algebraic reasoning are then
given as relations in context:

\begin{defn}
  A $\emph{graded $\Sigma$-relation}$ (of \emph{depth~$k\in \nat$}) in
  \emph{context} $X$, where $X$ is a $\Str(\Pi)$-object, is
  syntactically of the form $X \vdash_k \pi(t_1, \ldots, t_n)$ where
  $\ar(\pi) = n$ and $\pi(t_1,\ldots ,t_n)$ is a
  $\Pi \cup \{ = \}$-edge in $T_{\Sigma,k}(X)$. A \emph{graded (relational-algebraic)
    theory} $\mathbb{T}=(\Sigma, Q)$ consists of a graded
  signature~$\Sigma$ and a set~$Q$ of graded $\Sigma$-relations, the
  \emph{axioms} of~$\mathbb{T}$.  \end{defn}

\noindent The full system of graded relational-algebraic
theories~\cite{DBLP:phd/dnb/Ford23} allows arities of operations to be
$\str(\Pi)$-objects, so that definedness of terms becomes an issue. In
a setting where all arities are just natural numbers, the proof system
deriving graded $\Sigma$-relations over $\mathbb{T}$ simplifies
significantly

We denote by $\overline{\pazocal{A}}$ the set of relational axioms $\pazocal{A}$,
extended with the axioms for $=$ (namely symmetry, reflexivity, and
transitivity).
The side condition~$(*)$ of $(\mathsf{Mor})$ is
$t_{ij}\in T_{\Sigma,k}(X)$ for all~$i,j$, i.e.\ just requires that
the terms that occur have uniform depth. The side condition~$(+)$ of
$(\mathsf{Ax})$ similarly requires that
$\tau\colon Y \to T_{\Sigma, k}(X)$ is a uniform-depth substitution
and moreover that $Y \vdash_{m} \rho (x_1, \ldots, x_j)$ is an axiom
of~$\mathbb{T}$. Rule $(\mathsf{Mor})$ embodies the semantic condition
that all operations are morphisms of $\Pi$-structures. Rule
$(\mathsf{RelAx})$ incorporates the Horn axioms of~$\mathcal{H}$, and
rule $(\mathsf{Ax})$ the axioms of the graded theory; both types of
axioms can be introduced in substituted form. Finally, rule
$(\mathsf{Ctx})$ allows using the edges of the context.

Each graded theory $\mathbb{T}$ then induces a graded monad
$\mathbb{M}_\mathbb{T}$ on $\Str(\mathcal{H})$, where $M_nX$ is the
set of depth-n $\Sigma$ terms in context $X$ modulo derivable
equality, equipped with the set of derivable $\Pi$-edges.
Multiplication collapses nested terms, and the unit converts variables
into terms~\cite{DBLP:conf/calco/FordMS21,DBLP:phd/dnb/Ford23}.

\begin{rem}
  \label{rem:syntactic-monads}
  \noindent Algebraic presentations of graded monads often allow for
  easy verification of most of the assumptions required in our
  framework by purely syntactic means. In particular, a graded monad
  is depth-1 if it is presented by a relational-algebraic theory
  containing only operations and axioms of depth at
most~$1$~\cite[Proposition 5.29]{DBLP:phd/dnb/Ford23}.  Moreover, we
have the following syntactic characterization of monads for which~$M_0$ is a lifting
  of a monad on $\Set$:
\end{rem}

\begin{proprep}
  Let $\mathbb{T}$ be a graded relational-algebraic theory, such that
  all depth-0 axioms that conclude equality have the form  $X \vdash_0
  t_1 = t_2 $, where $E(X)= \emptyset$. Then $M_0$ is a lifting of a
  monad on $\Set$.
\end{proprep}
\begin{proof}
  We show that every equality can be derived in equational logic,
  theories of which are well known to correspond to monads on $\Set$.
  We proceed by structural induction on the proof tree. Assume the
  root concludes with a judgement of the form $X \vdash_0 t_1 = t_2$.
  Since $E(X)$ does not contain edges of the form $x_1 = x_2$, the rule
  $(\mathsf{Ctx})$ can not be applied in the proof. Applications of
  the rule $\mathsf{(RelAx)}$ let us reason about symmetry,
  transitivity and reflexivity. They have premises of the form $t_1 = t_2$,
  for which the statement holds by induction. The rule is then
  applicable by definition in equational reasoning. For the rule
  $\mathsf{(Mor)}$, if $\sigma$ is $=$, then the statement also holds
  for all premises by induction, and is applicable in equational
  reasoning. Finally, if an application of
  the rule $\mathsf{(Ax)}$ concludes with an equality, then $E(Y) =
  \emptyset$, implying that the statement is vacuously satisfied.
\end{proof}

\begin{expl}
  \label{expl:graded-theories}
  \begin{enumerate}[wide]
  \item \label{item:branching-theory} Disregarding size issues for the
    moment, graded monads of the form $\mathbb{M}_G$ (Example
    \ref{expl:graded-monads}.\ref{item:gm-powers}), for some functor
    $G\colon \Str(\mathcal{H}) \to \Str(\mathcal{H})$, can be
    presented by taking as the set of operations $\Sigma$ (all of
    depth~$1$) the disjoint union of the sets $UG\kappa$, where
    $\kappa$ ranges over all cardinals (equipped with the discrete
    relational structure), with operation $\sigma \in UG\kappa$ having
    arity~$\kappa$.  Then~$\sigma$ is interpreted over $GX$ as the map
    that sends $f\colon \kappa \to X$ to $Gf(\sigma)$. The
    axiomatization then consists of all relations-in-context
    $X \vdash_1 \pi(t_1, \ldots, t_n)$ (where $\pi \in \Pi\cup \{=\}$
    and~$X$ is a $\Pi$-structure) such that $\pi(t_1, \ldots, t_n)$
    holds in $GX$.
  \item The graded monad $\mathbb{M}_\trinc$ on $\Pre$
    (Example~\ref{expl:graded-semantics}.\ref{item:trinc}) is presented by
    a relational algebraic theory containing a binary depth-0
    operator~$+$, a depth-0 constant~$\bot$, subject to depth-0 axioms
    asserting that these operators form a join semilattice structure
    and additionally,
    \begin{equation*}
      \vdash_0 \bot \leq x \quad \vdash_0 x \leq x + y \quad \{x \leq z, y \leq z\}
      \vdash_0 x +y \leq z
    \end{equation*}
    (cf.~\cite{DBLP:journals/mscs/AdamekFMS21})\lsnote{Check how this
      works over preorders}. Moreover, the theory includes unary
    depth-1 operations $a(-)$ for all $a\in\act$, along with the
    following depth-1 axioms:
    \begin{equation*}
      \vdash_1 a(\bot) = \bot \qquad \vdash_1 a(x+y) = a(x) + a(y)
    \end{equation*}
    By Remark~\ref{rem:syntactic-monads}, $\mathbb{M}_\trinc$ is thus
    depth-1 and $M_0$ is a lifting of a monad on sets.

  \item The monad $\mathbb{M}_\ptrace$ from
    Example~\ref{expl:graded-semantics}.\ref{item:pts} is presented by a relational
    algebraic theory with binary depth-0 operators $+_p$ for
    $p \in [0,1]$ and unary depth-1 operators $a(-)$ for $a\in
    \act$. The axioms of the theory then include the axioms of
    interpolative barycentric algebras (at depth~0), which present the
    monad $\overline \Prob$ in (non-graded) quantitative
    algebra~\cite{DBLP:conf/lics/MardarePP16}, and the depth-$1$ axiom
    \begin{equation*}
      \vdash_1 a(x +^p y) = a(x) +^p a(y)
    \end{equation*}
    Again, $\mathbb{M}_\trinc$ is depth-1 and~$M_0$ is a lifting of a
    set monad by Remark~\ref{rem:syntactic-monads}.
  \end{enumerate}
\end{expl}

\subsubsection*{Relational-algebraic view of graded games}\; In this
setting, i.e.\ when the graded semantics is presented as a
relational-algebraic theory, we obtain a more syntactic view of the
graded conformance game. Indeed, the game can be viewed as playing out
a proof in the system above: The basis of the lattice
$B\subseteq \Str(\mathcal{H})_{M_0X}$ can be chosen as all
conformances that are generated by a single edge in $M_0X$, so
positions of Duplicator may be viewed as single edges, while positions
of Spoiler can be viewed as sets of edges. Admissibility of a move
$Z$ in position $P$, where $P$ is represented by the edge $e$, then
boils down to provability of the judgement
$Z \vdash_1 \gamma^\# \cdot e$, that is, Duplicator must be able to
prove that the relation holds among the successor structures of the
current state, assuming the edges they claim in~$Z$ hold for the
successor states.

\begin{proprep}
  Given a relational-algebraic presentation of the monad $\mathbb{M}$,
  a move $Z$ of Duplicator is admissible in position $e$ iff $Z \vdash_1
  \gamma^\# \cdot e$.
\end{proprep}
\begin{proof}
  $(\Rightarrow)$ Assume that the move Z is admissible, i.e. that
  $\{e\} \sleq (\mbar \iota \cdot \gamma^\#)^\bullet \mbar C(Z)$. By
  adjoint transposition we have $(\mbar \iota \cdot \gamma^\#)_\bullet
  \{e\} \sleq \mbar C(Z)$, and by extension $\{\mbar \iota \cdot
  \gamma^\# \cdot e\} \sleq \mbar C(Z)$. By completeness, $\mbar
  C(Z)$ consists of all edges $e'$ in $M_1X$ such that $Z \vdash_1
  e'$, so we have that $Z \vdash_1 \mbar \iota \cdot \gamma^\# \cdot
  e$. Since $\mbar \iota$ only identifies terms that are provably
  equal, we can apply the rule $\text{(RelAx)}$ to arrive at the
  judgement $Z \vdash_1 \gamma^\# \cdot e$

  $(\Leftarrow)$ Assume that $Z \vdash_1 \gamma^\# \cdot e$. By
  soundness, we have that $\gamma^\# \cdot e \in E(\mbar C(Z))$,
  implying that $\{e\} \sleq (\mbar \iota \cdot \gamma^\#)^\bullet
  \mbar C(Z)$.
\end{proof}

\section{Examples}
\noindent Complementing the running examples, we now present two further case studies
in detail, bisimulation topologies and quantitative similarity in
metric LTS.

\subsubsection*{Bisimulation Topologies}\; The bisimulation topology
on a deterministic automaton $\gamma \colon X \to 2 \times X^\act$
equips the state space~$X$ with the topology generated by subbasic
opens of the form $\{x \in X \mid x \text{ accepts } \sigma \}$, for
all words $\sigma \in
\act^*$~\cite{DBLP:conf/lics/KomoridaKHKH19}. This conformance was
originally constructed by lifting the $\Set$-functor
$G = 2 \times -^\act$ to topological spaces via the codensity
construction. Since our framework lives natively in the topological
category, in this case $\Top$, we can use this lifting directly:
Define the functor $\overline G:\Top \to \Top$ as
$\overline G = 2 \times(-)^\Sigma$ where $2$ denotes the Sierpi\'nski space and~$X^\act$ carries the product topology. As our desired conformance is
branching-time, we consider the graded monad $\mathbb{M}_\btop$ given
by $M_n = \overline G^n$
(c.f. Example~\ref{expl:graded-monads}.\ref{item:gm-powers}), with
$\mu^{n,k} = \id$ and $\eta = \id$, and the graded semantics
$(\id, \mathbb{M}_\btop)$.

The graded monad $\mathbb{M}_\btop$ does not lend itself to a
relational-algebraic presentation, since $\Top$ is not a category of
(finitary) relational structures. The requirements of our framework
are nevertheless verified easily: Like all branching-time graded
monads (Example~\ref{expl:graded-monads}.\ref{item:gm-powers}),~$\mathbb{M}_\btop$ is
depth-1. Moreover, $M_0 = \Id$ lifts the identity functor on $\Set$
and preserves the terminal object. Preservation of initial arrows
follows from $\mbar = \overline G$ being a codensity lifting of a
certain form~\cite{DBLP:conf/cmcs/Komorida20}.

Recall that topological spaces on a set $X$ can be represented by
nearness relations of the form $R \subseteq X \times \Pow X$
satisfying certain axioms, where $(x, A) \in R$ iff $x$ is in the
closure of $A$. Then a function $f\colon X \to Y$ is continuous
between topological spaces $(X, R), (Y, S)$ if $(x,A) \in R$ implies
that $(f(x), f[A]) \in S$. We choose as the basis $B$ of the lattice
of topologies on  $X$ all topologies that are generated by singleton
relations $\{(x, A)\} \subseteq X \times \Pow X$. An element of the
form $(x, A)$ can then be interpreted in the topology on automata
described above: If $x$ accepts the word $\sigma$, then there is
a state $x'\in A$ that accepts $\sigma$.

In this setting, Duplicator plays
sets~$Z\subseteq X \times \Pow(X)$ of such claims, such that taking a step
in $\gamma$ implies the claim for $(x, A)$. This game differs markedly
from the codensity game~\cite{DBLP:conf/cmcs/Komorida20}, in which a
round begins with Spoiler playing a discontinuous modal predicate.

Graded semantics allows us to study this type of conformance also in
serial non-deterministic  automata, i.e.\ $G\Pow^*$-coalgebras, where $\Pow^*$
is the non-empty powerset functor: We lift the
monad~$\Pow^*$ to~$\overline \Pow^*\colon \Top \to \Top$ by equipping the set $\Pow^* X$ with the topology
generated by all sets $V_U$ for $U\subseteq X$ open, where
$V_U = \{ v \in \Pow^* X \mid v \cap U \not = \emptyset \}$. Easy
calculations show that unit and multiplication are continuous, i.e.
this lifting is indeed a monad~$\overline\Pow^*$ on $\Top$. We then have
a continuous distributive law
$\lambda\colon\overline\Pow^*\overline G \to\overline G\overline\Pow^*$, given by
\begin{alignat*}{2}
  &\pi_1 \cdot \lambda(T) &&= \textstyle\bigvee \Pow^* \pi_1(T)\\
  &\pi_2 \cdot \lambda(T)(a) &&=  \{ f(a) \mid f \in \Pow^*
                           \pi_2(T)\}
\end{alignat*}
which induces a graded monad $\mathbb{M}_\ndtop$ with
      $M_n = \overline G^n\overline\Pow^*$, as well as a graded semantics
$(\id, \mathbb{M}_\ndtop)$. Since $\mathbb{M}_\ndtop$ is induced by a
distributive law, it is depth-1 \cite{DBLP:conf/calco/MiliusPS15}, and
$\mbar$ is the lifting of the functor $2 \times (-)^\act$ from $\Top$
to $\EM(\overline\Pow^*)$, and hence preserves initial arrows.  The
pre-determinization~$\gamma^\#$ of a $\overline G\overline\Pow^*$-coalgebra
$(X,\gamma)$ corresponds to a topologization of the well-known
powerset construction. The graded conformance one obtains is a
bisimulation topology generated by subbasic opens
$\{x \in X \mid x \text{ accepts } \sigma \}$\lsnote{@Jonas: Add proof
  to appendix.}, now in
the language semantics of nondeterministic automata (not covered in
work on codensity games~\cite{DBLP:conf/lics/KomoridaKHKH19}). The
graded conformance game runs similarly as in the deterministic case
described above, now playing out in the determinized system ~$\gamma^\#$.

To illustrate the game more concretely, consider the nondeterministic
automaton $\gamma$ with state set $X= \{x_1,x_2,x_3\}$ over the
singleton alphabet $\act = \{a\}$ displayed in Figure~\ref{fig:automaton}.

\begin{figure}
  \begin{center}
\begin{tikzpicture}[shorten >=1pt,node distance=2cm,on grid,auto]
  \node[state, accepting]                     (x_1)                      {$x_1$};
  \node[state]                     (x_2) [below right=of x_1] {$x_2$};
  \node[state]                                (x_3) [below left=of x_1] {$x_3$};

  \path[->] (x_1) edge [bend left]      node        {a} (x_2)
            (x_1) edge [loop above]     node        {a} (x_1)
            (x_2) edge [bend left]      node        {a} (x_1)
            (x_1) edge                  node [swap] {a} (x_3)
            (x_3) edge [loop below]     node        {a} (x_3);
\end{tikzpicture}
\end{center}
\caption{A nondeterministic automaton over the alphabet $\Sigma = \{a\}$}
\label{fig:automaton}
\end{figure}

Let us assume that Duplicator wants to show the set $(x_2, \{x_1\})$
to be included in the bisimulation topology described above, i.e. that all
words of accepted by $x_2$ are also accepted by $x_1$. This set is
first transformed, via pushforward, into a claim on the powerset
construction, of the form $P = (\{x_2\},\{\{x_1\}\})$. Duplicator
must play an admissible set of basic claims $Z$, such that
$(\gamma^\#)^\bullet Z$ contains P. An example of such a move is the
singleton set containing the claim $(\{x_1\},\{\{x_1, x_2,
x_3\}\})$, for which admissibility is easily checked. Spoiler has no
option other than playing this claim, after which Duplicator may play
$(\{x_1, x_2, x_3\}, \{\{x_1, x_2, x_3\}\})$. In this Position the
moves repeat indefinitely, yielding the win for Duplicator. Duplicator
can always choose $Z$ to be a singleton set, since the alphabet
$\Sigma$ has only one element. For larger alphabets, Duplicator would
have to play one claim for each symbol, essentially allowing Spoiler to
challenge transitions. In this way a winning Spoiler
strategy in state $(x, A)$ spells out a distinguishing word,
accepted by $x$ but by no state in $A$.

\subsubsection*{Quantitative Similarity in Metric LTS}\;
We model quantitative simulation, which is
inherently asymmetric, by choosing as the topological category
$\QMet$, the category of hemimetric spaces and nonexpansive maps,
which is defined like $\PMet$, but without the symmetry axiom.
We denote by $\overline{\Pow}_\omega$ the
lifting of the finite powerset functor
$\pazocal{P}_\omega\colon \Set \to \Set$ to $\QMet$ given by equipping
with $\pazocal{P}_\omega X$ with the asymmetric Hausdorff
distance. That is, for $S, T \subseteq X$ we have
\begin{equation}\textstyle
d^\rightarrow(S, T) = \bigvee_{s \in S} \bigwedge_{t \in T} d(s, t).
\label{eqn:asym-hausdorff}
\end{equation}
We also consider the lifting $\overline \Pow^\leftrightarrow_\omega$ that equips
$\Pow X$ with the symmetric Hausdorff distance $d(S,T) =
d^\rightarrow(S,T)\vee d^\rightarrow(T,S)$.
Finitely
branching metric transition systems are coalgebras for the functor
$\overline{\Pow}^\leftrightarrow_\omega(\act \times {-})$, where $\act$ is a
(pseudo)metric space of labels. In analogy to the classical two-valued
linear-time / branching-time spectrum~\cite{Glabbeek01}, one has a
quantitative linear-time / branching-time spectrum on metric
transition systems~\cite{DBLP:journals/tcs/FahrenbergL14}, consisting
of behavioural metrics of varying granularity. In the following we will instantiate our
framework to both simulation distance and ready simulation distance.

We define the graded monad $\mathbb{M}_\qsim$ characterizing
(asymmetric) similarity distance by setting
$\mathbb{M}_\qsim = \mathbb{M}_G$ for $G = \overline \Pow_\omega (\act \times
{-})$ (recall that for $\mathbb{M}_G$ we
have $M_n = G^n$). Initiality preservation
of $\mbar = \overline{\Pow}_\omega (\act \times {-})$ then follows from the
  fact that the functor is an instance of the generalized Kantorovich
  construction. A theory for this graded monad can be constructed as
  in Example \ref{expl:graded-theories}.\ref{item:branching-theory}.
  \jfnote{Easier Algebraic theory?}

Since the graded monad is a variation of the branching-time monad,
determinization only has an effect on the distances, but not on the
structure of the system.
To play the game for quantitative simulation in a coalgebra
$\gamma\colon X \to \overline \Pow^\leftrightarrow_\omega(\act \times X)$, assume that
it is Duplicator's turn to play in position $x_1 =_\epsilon x_2$, where
$x_1, x_2\subseteq X$. Duplicator must then provide a
set $Z$ of distances between elements of $X$ such that
$d^\rightarrow(\gamma(x_1), \gamma(x_2)) \leq \epsilon$ follows under these
assumptions.
Spoiler then challenges one of those claims, and the game
continues in the next position.
Note that by~\eqref{eqn:asym-hausdorff} providing such a $Z$ amounts to finding
for each $(a,x) \in \gamma(x_1)$ a pair
$(a',x') \in \gamma(x_2)$ such that
$d(a,a')\leq \epsilon$, and adding the claim $d(x, x') \leq \epsilon$ to $Z$.
We may thus equivalently rephrase the moves in the game in a more
familiar manner, where in each round Spoiler first chooses an element of
$\gamma(x_1)$ and Duplicator only then responds with an element of
$\gamma(x_2)$. The game
relates strongly to existing quantitative
games~\cite{DBLP:journals/tcs/FahrenbergL14}, but pursues a more local
perspective: The graded conformance game is a standard win/lose game
and records explicit distance bounds in each position, while in
quantitative games~\cite{DBLP:journals/tcs/FahrenbergL14}, plays
receive quantitative reward values ex post.

 Similarly, we can %
define a graded monad $\mathbb{M}_\qrsim$ for ready simulation
distance by including at each step the available actions of the
current state, that is
$M_n = G^n$ for $G = \overline\Pow_\omega((\overline\Pow^\leftrightarrow_\omega\act)
\times \act \times {-})$.
The transformation $\alpha$ of the graded semantics $(\alpha,
\mathbb{M}_{\qrsim})$
is given by $\alpha(T) = \{(P_\omega
\pi_1(T), a, \{x\}) \mid (a,x) \in T\}$.

\jfnote{Theory for ready simulation.}

The game for ready simulation is then played like the simulation game, except that the distances of
currently enabled transitions factor into the calculation of the distance
of one-step behaviours $d(\gamma^\#(x_1),\gamma^\#(x_2))$.

\section{Conclusion and Future Work}

\noindent We have extended generic games for behavioural
equivalences~\cite{DBLP:conf/lics/FordMSB022} to cover system
semantics with additional structure given by a topological category,
such as order-theoretic, metric, or topological
structure. Parametrization over the granularity of the system
semantics relies on the framework of graded
semantics~\cite{DBLP:conf/calco/MiliusPS15,DBLP:conf/concur/DorschMS19},
in which the coalgebraic type functor is mapped into a suitably chosen
graded monad. We have provided a finite-depth and an infinite-depth
version of the game; the latter bear some resemblance to fixpoint
games on complete lattices, but are applicable only under additional
assumptions on the underlying graded monad. We have subsequently
instantiated the framework to the case where the topological category
determining the type of behavioural conformances is given as a
category of relational structures such as preorders or pseudometric
spaces. In this setting, we can view games as playing out proofs in
relational-algebraic
theories~\cite{DBLP:conf/calco/FordMS21,DBLP:phd/dnb/Ford23}, with
turns selectively expanding branches of the proof tree.

Issues for further research include the extraction of distinguishing
formulae from winning strategies, and the algorithmic calculation of
such winning strategies. We expect the presentation of topological
categories as categories of relational structures to play an important
role in the implementation of such algorithms.

\bibliographystyle{splncs04}
\bibliography{refs}

\end{document}